\definecolor{r}{rgb}{1,0,0}
\definecolor{b}{rgb}{0,0,1}
 \newcommand{\tr}[1]{}
\newcommand{\sep}{\mathord{:}}
\newcommand{\fra}[2]{\textstyle{\frac{#1}{#2}}}
\newcommand{\beqn}{\begin{eqnarray}\begin{aligned}}
\newcommand{\eqn}{\end{aligned}\end{eqnarray}}
\newcommand{\lra}[1]{{\langle #1 \rangle}}
\newcommand{\lrs}[1]{{\left[ #1 \right]}}
\newcommand{\id}{\mathbf{1}}
\theoremstyle{plain}
\newtheorem{thm}{Theorem}[section]
\newtheorem{res}{Result}
\newtheorem{lem}[thm]{Lemma}
\theoremstyle{definition}
\title{A tensorial approach to the inversion of group-based phylogenetic models}
\author{Jeremy G. Sumner$^{1}$,  Peter D. Jarvis$^{3}$, and Barbara R. Holland$^{2}$}
\begin{document}
\maketitle

\begin{abstract}
Using a tensorial approach, we show how to construct a one-one correspondence between pattern probabilities and edge parameters for any group-based model.
This is a generalisation of the ``Hadamard conjugation'' and is equivalent to standard results that use Fourier analysis. 
In our derivation we focus on the connections to group representation theory and emphasize that the inversion is possible because, under their usual definition, group-based models are defined for abelian groups only.
We also argue that our approach is elementary in the sense that it can be understood as simple matrix multiplication where matrices are rectangular and indexed by ordered-partitions of varying sizes.
\end{abstract}

\vspace{5em}

\hrule\mbox{}\\
\thanks{\footnotesize{
\noindent
School of Mathematics and Physics, University of Tasmania, Australia\\
$^1$ARC Research Fellow\\
$^2$ARC Future Fellow\\
$^3$Alexander von Humboldt Fellow\\
\textit{keywords:} groups, representation theory, symmetry, Markov chains\\
\textit{email:} jsumner@utas.edu.au
}
}




\section{Introduction}

In a series of papers from 1989 and the early 90s, Hendy and colleagues introduced the Hadamard conjugation as a novel tool for phylogenetic analyses \citep{hendy1989,hendy1989b,hendy1993}.
They found an invertible relationship between a phylogenetic tree, as characterized by an edge length spectrum, and the probability of each site pattern (referred to as the sequence spectrum).
Originally introduced only for the 2-state symmetric model, the Hadamard conjugation was later extended to the K3ST model \citep{hendy1994} and further to any of the so-called ``group-based'' models \citep{szekely1993}.
Hadamard conjugation has been used as both a tool for simulation \citep{hendy1993c} and to look at statistical properties of methods, exploring the inconsistency of parsimony under a molecular clock \citep{hendy1993,holland2003}.
For these sorts of applications, following the notation in \citet{felsenstein2004}, we can use the Hadamard transform $H$ to start with an edge length spectrum $\gamma$ and calculate the sequence spectrum $s=H^{-1}\log(H\gamma)$.
The beauty of Hadamard conjugations is that one can also begin with an observed sequence spectrum $\hat{s}$ and perform the inverse of the conjugation to empirically obtain an edge length spectrum $\hat{\gamma} = H^{-1}\log(H \hat{s})$.
Although it is not expected that the $\hat{\gamma}$ spectrum will precisely match a tree, \citet{hendy1991} proposed using a optimisation criterion to map from $\hat{\gamma}$ to the ``closest tree''.

Several authors have commented that it is potentially a useful feature of Hadamard conjugation that data isn't forced onto a fixed tree.   
The conflicting information can be retained and interpreted in the form of a ``lentoplot'' \citep{lento1995}  or a splits-graph \citep{huber2001}, with both of these methods implemented in \emph{Spectronet} \citep{huber2002}.
\citet{schliep2009} gives some more statistical justification for such an approach by making a link to modern statistical techniques such as the Lasso and Ridge regression.

\citet{vonhaeseler1993} seems to be the first paper that explicitly suggests using Hadamard conjugation to provide a likelihood framework for networks.
The chief idea being that one can start with an edge length spectrum that encodes a set of incompatible splits, use the Hadamard transformation to get site probabilities and use these to determine a likelihood.
This idea was further explored by \citet{bryant2005}, and \citet{bryant2009} followed this through defining the ``$n$-taxon process'' for group-based models.
It should be noted that likelihoods calculated via Hadamard are not equivalent to likelihoods calculated by taking a mixture of trees.
Indeed, \citet{matsen2007,matsen2008} used Hadamard methods in combination with phylogenetic invariants to show that mixtures of trees with the same topology can exactly mimic another tree under the 2-state model.
Considering biological applications, thinking in terms of mixtures of trees or partitions where the data can be thought of as arising on a set of trees \citep{griffiths1996,griffiths1997,jin2006} seems more reasonable than the Hadamard conjugation.
\citet{strimmer2000} suggested using split networks as a spring board to likelihood-based analyses on DAGs, but later identified several problems with the approach \citep{strimmer2001}; most notably, in split-networks internal nodes do not have a biological interpretation as an ancestor.

In \citet{sumner2010}, we gave some additional insight into the interpretation of applying the Hadamard conjugation in a network setting.
We showed that permutation group structure inherent to the Hadamard transformation -- as for any group-based model -- restricts the resulting process from being capable of reproducing truly convergent processes.
This is a serious limitation, as one of the biological motivations for explicit network models is the ability to model convergent processes.
We also presented an alternative algebraic formalism for the general Markov model, analogous to the $n$-taxon process, but capable of reproducing convergent processes.
From the point of view of group representation theory, the inversion of group-based models relies on the fact that the \emph{irreducible} representations of an abelian group are one-dimensional, and the model structure essentially reduces to group characters -- hence the standard presentation of a Fourier inversion.
In this article, we make this connection concrete.
For the general Markov model, it is then immediately apparent that an analogous inversion is not possible because the underlying irreducible representations are not one-dimensional.
In fact, to obtain one-dimensional representations for the general Markov model, it is necessary to apply higher-degree polynomial maps (beyond the degree 1, linear case), and define ``Markov invariants'' \citep{sumner2008}.
These invariants present one-dimensional representations but at the cost of the higher degree -- degree 5 in the case of the general Markov model on four states on quartet trees \citep{sumner2009,holland2012}. 
This connection between Hadamard transformation and Markov invariants is an interesting one, but we do not discuss it further here.

In this paper we approach the inversion of group-based phylogenetic models by taking a representation-theoretic perspective and working explicitly with tensor indices.
Our approach rests heavily on the formalism of ``phylogenetic tensors'', as presented in \citet{bashford2004}, for the binary-symmetric and K3ST model, and \citet{sumner2008,sumner2010}, for the general Markov model.

\section{Background}

In this paper we consider the continuous-time formulation of Markov processes, and show how to implement the inversion of a group-based phylogenetic model based on \emph{any} abelian group.
We note that such an inversion requires a map from tensor product space (where elements are indexed by ordered-$n$-partitions) to phylogenetic splits (where elements are indexed by bipartitions).
We achieve this by finding canonical maps from bipartitions to ordered-$n$-partitions.

For a group $G$ with order $|G|=d$, we write $G=\{\sigma_1,\sigma_2,\ldots ,\sigma_d\}$, and, when necessary, write $\epsilon\in G$ to specify the identity element of $G$. 
Consider the vector space $\mathbb{C}^d\cong \left\langle G\right\rangle_{\mathbb{C}}\equiv  \left\langle \sigma_1,\sigma_2,\ldots ,\sigma_d \right\rangle_\mathbb{C}=\{v=v_1\sigma_1+v_2\sigma_2+\ldots+v_d\sigma_d:v_i\in \mathbb{C}\}$, with scalar multiplication and vector addition defined via 
\beqn
v+\lambda v'&=(v_1\sigma_1+v_2\sigma_2+\ldots+v_d\sigma_d)+\lambda(v'_1\sigma_1+v'_2\sigma_2+\ldots+v'_d\sigma_d)\\
& = (v_1+\lambda v'_1)\sigma_1+(v_2+\lambda v'_2)\sigma_2+\ldots +(v_d+\lambda v'_d)\sigma_d,\nonumber
\eqn 
for all $v,v'\in \langle G\rangle_{\mathbb{C}}$ and $\lambda\in \mathbb{C}$.
The \emph{regular representation,} $\rho_{\text{reg}}: G \rightarrow GL(d,\mathbb{C})$, is then defined by setting the group action 
\[\sigma :v\mapsto \sigma v=v_1(\sigma\sigma_1)+v_2(\sigma\sigma_2)+\ldots +v_d(\sigma\sigma_d),\]
for all $v\in \langle G\rangle_{\mathbb{C}}$ and $\sigma\in G$.
If we fix $\{\sigma_1,\sigma_2,\ldots ,\sigma_d\}$ as an ordered basis for $\langle G \rangle_{\mathbb{C}}$, it is then clear -- via Caley's theorem -- that each group element $\sigma$ gets mapped to a permutation matrix $K_\sigma:=\rho_{\text{reg}}(\sigma)$, with $K_\sigma\sigma_i=\sum_{j}\lrs{K_\sigma}^j_i\sigma_j:=\sigma\sigma_i$.
Thus $K_\sigma$ has matrix elements 
\beqn\label{eq:Ksigdef}
\left[K_\sigma\right]^{j}_{i}=\left\{\begin{array}{l}1, \text{ if } \sigma_j=\sigma\sigma_i, \\ 0, \text{ otherwise.}\end{array}\right.
\eqn
Consider the unit column vectors 
\[
\xi_1=(1,0,0,\ldots,0)^T,\quad \xi_2=(0,1,0,0,\ldots,0)^T,\quad\ldots\quad \xi_d=(0,0,\ldots,0,1)^T;
\] 
and identify $\sigma_i\equiv \xi_i$,  so that the group action becomes $\sigma:\xi_i\mapsto K_\sigma\xi_i=\xi_j$ where $\sigma_j=\sigma\sigma_i$.
Thus the matrix elements $\left[K_\sigma\right]^{j}_{i}$ have $i$ as the column label and $j$ as the row label.

With the regular representation in hand, it can then be shown (see \citet{sumner2011}) that the group-based model defined by $G$ has rate matrices of the form
\beqn
Q=-\lambda\id+\sum_{\epsilon\neq \sigma\in G} \alpha^{\sigma}K_\sigma,\nonumber
\eqn
where each $0\leq \alpha^\sigma \in \mathbb{R}$ and $\lambda=\sum_{\epsilon\neq \sigma\in G}\alpha^{\sigma}$.

The regular representation is one example of the general concept of a \emph{representation} of $G$ on a vector space $V$, defined as a homomorphism $\rho: G\rightarrow GL(V)$ satisfying $\rho(g_1g_2)=\rho(g_1)\rho(g_2)$ for all $g_1,g_2\in G$.
A representation is said to be \emph{reducible} if there exists a proper subspace $U\subset V$ satisfying $\rho(g)U\subset U$, i.e. the set of matrices $\rho(G)$ send vectors in $U$ back to $U$.
In this case, $U$ is called an \emph{invariant subspace}.
The representation $\rho$ is then called \emph{irreducible} if $V$ does not contain any invariant subspaces.

The reader should note that the usual construction of a ``group-based'' model  \citep{semple2003} stipulates that $G$ be \emph{abelian}. 
Although the construction just given using the regular representation allows for non-abelian $G$, we will nonetheless only consider the abelian case in this paper, because, as discussed in the introduction, it is only in the abelian case that a (linear) inversion of phylogenetic models is possible.
In this case the irreducible representations of $G$ are all one-dimensional \citep{sagan2001}, and hence reduce to the group characters, as is exploited in the previous approaches using Fourier analysis \citep{szekely1993}.

\subsection{Phylogenetic tensors}

As is shown in \citet{sumner2005} and in more detail in \citet{sumner2010}, phylogenetic distributions on the state space $\lrs{d}:=\{1,2,\ldots,d\}$ can be represented as tensors in the $n$-fold tensor product space $\otimes^n\mathbb{C}^{d}:=\mathbb{C}^{d}\otimes \mathbb{C}^{d} \otimes \ldots \otimes \mathbb{C}^{d}$. 
If we choose $\{\xi_1,\xi_2,\ldots,\xi_d\}$ as an ordered basis for $\mathbb{C}^{d}$, and ordered basis $\{\xi_{i_1}\otimes \xi_{i_2}\otimes \ldots \otimes \xi_{i_d}\}_{i_1,i_2,\ldots,i_n\in \lrs{d}}$ for the tensor product space, a ``phylogenetic tensor'' $P=\sum_{i_1,i_2,\ldots,i_n \in \lrs{d}}p_{i_1i_2\ldots i_n}\xi_{i_1}\otimes \xi_{i_2}\otimes \ldots \otimes \xi_{i_n}\in \otimes^n\mathbb{C}^{d}$ has the interpretation that the components $p_{i_1i_2\ldots i_n}$ represent the probability that the $n$ taxa take on the states $i_1,i_2,\ldots, i_n$ respectively.

Phylogenetic branching events can be generated by the linear operator $\delta:\mathbb{C}^d\rightarrow \mathbb{C}^{d}\otimes \mathbb{C}^{d}$ defined on the chosen basis via
\beqn
\delta(\xi_i):=\xi_i\otimes \xi_i,\qquad \delta(v)=\delta(\sum_i v_i\xi_i)=\sum_{i}v_i\delta(\xi_i)=\sum_iv_i\xi_i\otimes \xi_i.\nonumber
\eqn
The remarkable fact for group-based models, central to the present article, is that the rate matrices ``intertwine'' particularly simply with the branching operator:
\beqn
\delta(K_\sigma \xi_i)=\delta(\xi_{\sigma(i)})=\xi_{\sigma(i)}\otimes \xi_{\sigma(i)}=K_\sigma\otimes K_\sigma\cdot \delta(\xi_i).\nonumber
\eqn
Thus we have
\beqn
\delta\cdot Q=\left(-\lambda \id\otimes \id + \sum_{\epsilon\neq\sigma\in G }\alpha^\sigma K_{\sigma}\otimes K_{\sigma}\right)\cdot \delta,\nonumber
\eqn
which in turn implies (via the linearity of $\delta$) that 
\beqn\label{eq:intertwine}
\delta \cdot e^{Qt}=e^{-\lambda}\exp\left({\sum_{\epsilon\neq\sigma\in G }\alpha^\sigma K_{\sigma}\otimes K_{\sigma}}\right)\cdot \delta.
\eqn
This relation shows that mathematically, and hence conceptually, ``Markov evolution on a single followed by a branching event'' can be replaced with ``Branching event on a single taxa followed by (correlated) Markov evolution of two taxa.''
This equivalence is illustrated in Figure~\ref{fig:equiv}.

\begin{figure}[tbp]
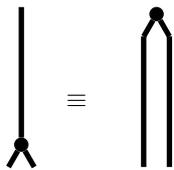

\centering

\begin{tabular}{ccc}

$
\psmatrix[colsep=.075cm,rowsep=.2cm]
&&&&[mnode=dot,dotscale=.00001]\\
\\
\\
\\
\\
\\
&&&&[mnode=dot,dotscale=1.5]\\
&&&[mnode=dot,dotscale=.00001]&&[mnode=dot,dotscale=.00001]\\
\ncline[linewidth=2pt]{1,5}{7,5}
\ncline[linewidth=2pt]{7,5}{8,4}
\ncline[linewidth=2pt]{7,5}{8,6}
\endpsmatrix
$ 

&

$
\psmatrix[colsep=1cm,rowsep=1cm]
\\
\equiv \\
\\
\endpsmatrix
$

&

$
\psmatrix[colsep=.075cm,rowsep=.2cm]
&&&&[mnode=dot,dotscale=1.5]\\
&&&[mnode=dot,dotscale=.00001]&&[mnode=dot,dotscale=.00001]\\
\\
\\
\\
\\
\\
&&&[mnode=dot,dotscale=.00001]&&[mnode=dot,dotscale=.00001]\\
\ncline[linewidth=2pt]{1,5}{2,4}
\ncline[linewidth=2pt]{1,5}{2,6}
\ncline[linewidth=2pt]{2,4}{8,4}
\ncline[linewidth=2pt]{2,6}{8,6}
\endpsmatrix
$ 
\end{tabular}
\caption{Markov evolution on a single followed by a branching event (illustrated on the left), is equivalent to a branching event on a single taxa followed by correlated Markov evolution of two taxa (illustrated on the right).
Mathematically, this equivalence can be implemented by exploiting the equality given in (\ref{eq:intertwine}).}
\label{fig:equiv}
\end{figure}

In \citet{sumner2010} we showed how to generalise this intertwining action to the case of the general Markov model.
Interestingly, the general intertwining has quite different structure from what occurs in group-based models, and the simplicity of (\ref{eq:intertwine}) is actually quite misleading for the general Markov model.
We refer the reader to \citet{sumner2010} for more discussion on this point.

Returning to the case of group-based models, for each subset $A\subseteq \lrs{n}$, we define a linear map on $\otimes^n\mathbb{C}^d$ as the tensor product $K^{(A)}_\sigma:=K_{\sigma}^{a_1}\otimes K_{\sigma}^{a_2} \otimes \ldots \otimes K_{\sigma}^{a_n}$ where $a_i=1$ if $i\in A$ and 0 otherwise.
For example, if $n=5$, we have
\beqn
K_{\sigma}^{(\{1,2,4\})}=K_\sigma \otimes K_{\sigma} \otimes \id \otimes K_{\sigma} \otimes \id.\nonumber
\eqn
To develop a phylogenetic tensor on a tree, we root the phylogenetic tree at taxon $n$, and label edges by subsets $\emptyset \neq e\subseteq\lrs{n-1}$, where $i\in e$ if the path from taxa $n$ to taxa $i$ crosses the edge labelled by $e$. 
A five taxa tree with this labelling, is presented in Figure~\ref{fig:tree}.
To each edge labelled by $\emptyset \neq e\subseteq\lrs{n-1}$, we assign the rate matrix
\beqn
Q_{e}:=-\lambda_e \id+\sum_{\epsilon\neq\sigma \in G}\alpha_{e}^\sigma K_\sigma,\nonumber
\eqn  
where each $\alpha_{e}^\sigma\geq 0$ is the rate of substitution for all states $\sigma_1$ to $\sigma_2$ satisfying $\sigma=\sigma_2\sigma_1^{-1}$, and $\lambda_e=\sum_{\sigma\in G}\alpha_{e}^\sigma$.
Each edge is then assigned substitution matrix $M_e=e^{Q_e}$, so that the time parameter for each edge is absorbed into the definition of $Q_e$. 

\begin{figure}[tbp]
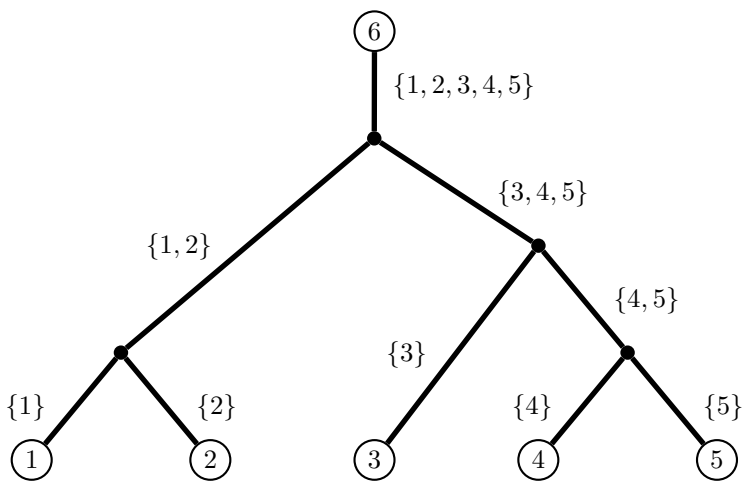

  \centering
$    
 \psmatrix[colsep=.8cm,rowsep=1cm,mnode=circle]
&&&& 6\\
&&&& [mnode=dot,dotscale=1.5]\\
&&&&&& [mnode=dot,dotscale=1.5]\\
& [mnode=dot,dotscale=1.5] &&&&&& [mnode=dot,dotscale=1.5] \\
1 & & 2 && 3 && 4 & & 5 
\ncline[linewidth=2pt]{1,5}{2,5}>{\{1,2,3,4,5\}}
\ncline[linewidth=2pt]{2,5}{3,7}>{\{3,4,5\}}
\ncline[linewidth=2pt]{3,7}{5,5}<{\{3\}}
\ncline[linewidth=2pt]{2,5}{4,2}<{\{1,2\}}
\ncline[linewidth=2pt]{4,2}{5,1}<{\{1\}}
\ncline[linewidth=2pt]{4,2}{5,3}>{\{2\}}
\ncline[linewidth=2pt]{3,7}{4,8}>{\{4,5\}}
\ncline[linewidth=2pt]{4,8}{5,7}<{\{4\}}
\ncline[linewidth=2pt]{4,8}{5,9}>{\{5\}}
\endpsmatrix
$

                    
  \caption{A six taxa tree rooted at taxon 6 with edges labelled by subsets of $\{1,2,3,4,5\}$.}
  \label{fig:tree}
\end{figure}

Now iterating (\ref{eq:intertwine}) multiple times, \citet{bashford2004,sumner2010} show that any phylogenetic tensor can be written as
\beqn\label{eq:networkrep}
P=e^{-\lambda}\exp\left(\sum_{\emptyset \neq e\subseteq \lrs{n-1},\sigma\in G} \alpha_{e}^\sigma K^{(e)}_\sigma\right)\cdot \delta^{n-1}\pi.
\eqn
where $\lambda=\sum_{\emptyset \neq e\subseteq \lrs{n-1}}\lambda_e=\sum_{\emptyset \neq e\subseteq \lrs{n-1},\epsilon\neq \sigma\in G} \alpha_{e}^\sigma$, and $\delta^{n-1}\pi$ is the $d\times d\times \ldots \times d$ tensor that represents the ``zero edge-length star tree'' distribution on $n$ taxa.
It is this form of phylogenetic tensors that will do a lot of the heavy lifting in the discussion that follows.
The reader should note that under this representation, there is no need for the edge parameters $\{\alpha_{e}^\sigma:\emptyset \neq e\subseteq \lrs{n-1},\sigma\in G\}$ to be chosen to be compatible with a particular tree, hence the possibilities for generalising to non-tree-like or network models, as discussed in the introduction.

The stationary distribution for group-based models is uniform (because the rate matrices are doubly stochastic).
In this paper we always assume a stationary distribution, so that:
\beqn
\pi=\fra{1}{d}(1,1,\ldots ,1)^T,\nonumber
\eqn
and $\delta^{n-1}\pi$ has tensor components
\beqn
\left[\delta^{n-1}\pi\right]_{i_1i_2\ldots i_n}=\left\{\begin{array}{ll}\frac{1}{d},\text{ if }i_1=i_2=\ldots=i_n,\\0,\text{ otherwise.}\end{array}\right.\nonumber
\eqn

This concludes our discussion of the tensor presentation of phylogenetic probability distributions under group-based models. 
We now review the standard Fourier analysis of these models, and make the connections to representation theory explicit.

\subsection{Connection to Fourier analysis}

In this subsection we briefly point out the connection between the standard Fourier analysis and representation theory. 
Understanding this connection -- or indeed the underlying representation theory -- is not required to understand our general method, so the uninterested reader may wish to skip forward directly to the next section.

Given an (abelian) group-based model the crucial aspects of the Fourier transform that are exploited in the phylogenetic context (e.g \cite{chor2000,evans1993,hendy1989,hendy1989b,hendy1993,hendy1994,hendy2008,sturmfels2004,szekely1993b,szekely1993}) are as follows.

\begin{res}\label{lem:fourier}
Let $f_1,f_2$ be functions from a finite abelian group $G$ to $\mathbb{C}$ and $\textbf{1}$ the constant function.
\begin{enumerate}
\item The group $G$ and the dual group $\widehat{G}:=\text{Hom}(G,\mathbb{C}^\times)$ are isomorphic as abstract groups.
\item Fourier transform turns convolution into multiplication, i.e., $\widehat{f_1\ast f_2}=\widehat{f_1}\cdot\widehat{f_2}$, and
\item If $\chi\in \widehat{G}$ is irreducible, then $\widehat{\textbf{1}}(\chi)=|G|$ if $\chi=1$ (the unit in $\widehat{G}$) and $\widehat{\textbf{1}}(\chi)=0$ otherwise.
\end{enumerate}
\end{res}

We recall the aspects of the representation theory that are needed in our discussion and express the above in terms of them.
For the reader who is unfamilar with the general theory, we recommend the excellent elementary text \citet{sagan2001}.

\begin{res}\label{thm:projectionops}
Given a representation $\rho:G\rightarrow GL(V)$ and an irreducible character $\chi: G \rightarrow \mathbb{C}$, the projectors onto the irreducible representations of $G$ are given by
\beqn
\Theta_{\chi}:=\fra{1}{|G|}\sum_{\sigma\in G}\chi(g)\rho(g).\nonumber
\eqn
\end{res}

\begin{res}
The regular representation contains each irreducible representation $\rho_\chi$ exactly $\dim(\rho_\chi)=\chi(e)$ times.
\end{res}

\begin{res}\label{thm:oneDreps}
The irreducible representations of an abelian group are one-dimensional.
\end{res}

\begin{res}
The character table of an abelian group $G$ diagonalizes the regular representation.
\end{res}

\begin{res}
Any (finitely generated) abelian group $G$ is isomorphic to a direct product of cyclic groups of prime-power order, ie. $G\cong \mathbb{Z}_{r_1}\times \mathbb{Z}_{r_2}\times \mathbb{Z}_{r_q}$ where each $r_i=q_i^{n_i}$ where $q_i$ is prime and $n_i$ is a positive integer.
\end{res}

\begin{res}
The irreducible representation of $\mathbb{Z}_{r}$ are given by $\rho_{i}(\sigma)=\omega^{i}$ with $i=0,1,2,\ldots, {r-1}$ and $\omega^{r}=1$.
\end{res}

\begin{res}
The irreducible representations of $\mathbb{Z}_{r_1}\times \mathbb{Z}_{r_2}\times\ldots\times \mathbb{Z}_{r_q}$ are given by $\rho_{i_1i_2\ldots i_q}=\rho^{(1)}_{i_1}\otimes \rho^{(2)}_{i_2}\otimes \ldots \otimes \rho^{(q)}_{i_q}$, where $\rho^{(i)}_k(\sigma_i)=(\omega_{i})^k$ and $(\sigma_i)^{r_i}=e_i$ with $e_i$ the identity in $\mathbb{Z}_{r_i}$ and $(\omega_{i})^{r_i}=1$.
\end{res}
\begin{proof}
The representation $\rho:=\rho_1\otimes\rho_2$ of $G=G_1\times G_2$, constructed from irreducible representations $\rho_1,\rho_2$ of $G_1,G_2$ respectively, is irreducible. 
The result follows from induction.
\end{proof}

\begin{res}
The Fourier analytic results of Result~\ref{lem:fourier} have representation theory counterparts:
\begin{enumerate}
\item The regular representation is faithful, i.e. injective.
\item The columns of the character table project onto the irreducible subspaces. Therefore, the character table of an abelian group $G$ diagonlizes the regular representation.
\item For an abelian group, the identity column of the character table obviously sums to $|G|$ and the other columns are orthogonal, thus the other columns sum to 0.
\end{enumerate}
\end{res}

In what follows, we discuss the inversion of abelian group-based models.
We present the simplest case with $G=\mathbb{Z}_2$ in \S\ref{sec:binsymm}; the $G=\mathbb{Z}_3$ case in \S\ref{sec:Z3}; the $G=\mathbb{Z}_2\times \mathbb{Z}_2$ case in \S\ref{sec:Z2Z2}; the general $G=\mathbb{Z}_r$ case in \S\ref{sec:Zr}; and finally we discuss the case of any abelian group in \S\ref{sec:anyabelian}.

\section{The binary-symmetric case}\label{sec:binsymm}
We begin with the inversion of the so-called ``binary-symmetric'' model.
Consider $\mathbb{C}^2$ with standard basis
\beqn
\left\{\xi_0=\left(\begin{array}{r}1 \\ 0\end{array}\right),\xi_1=\left(\begin{array}{r}0 \\ 1\end{array}\right)\right\}.\nonumber
\eqn
As a group-based model, the binary-symmetric model arises by taking the group 
\beqn
G:=\mathbb{Z}_2=\{0,1\}_{+ \text{(mod 2)}}\cong\lra{\sigma|\sigma^2=\epsilon},\nonumber
\eqn
with a generic rate matrix given by
\beqn
Q=\left(
\begin{array}{rr}
	-1 & 1 \\
	1  & -1 \\
\end{array}\right)=-\id+K,\nonumber
\eqn
where $K=\left(
\begin{array}{rr}
	0 & 1 \\
	1  & 0 \\
\end{array}\right)$ is the permutation matrix representing $\sigma$ in the standard basis.

Now
\[\rho_{\text{reg}}:\begin{array}{ll}\mathbb{Z}_2\rightarrow \mathbb{M}_{2}(\mathbb{C}) \\ \sigma \mapsto K\end{array}\] 
is the regular representation of $\mathbb{Z}_2$, and the character table of $\mathbb{Z}_2$ given in Table~\ref{tab:Z2chartab} is easily recognised to be the Hadamard matrix
\beqn
h=\left(
\begin{array}{rr}
	1 & 1 \\
	1  & -1 \\
\end{array}\right).\nonumber
\eqn
As $\mathbb{Z}_2$ is an abelian group, the irreducible representations are one-dimensional (Result~\ref{thm:oneDreps}).
Recalling Result~\ref{thm:projectionops}, the corresponding projection operators can be read off from the columns of the character table.
That is, the operators
\beqn
\Theta_{id}:&=\fra{1}{2}\left(\epsilon+\sigma\right),\nonumber\\
\Theta_{sgn}:&=\fra{1}{2}\left(\epsilon-\sigma\right);\nonumber
\eqn
project $\rho_{\text{reg}}=id\oplus sgn$ onto the $id$ and $sgn$ representations of $\mathbb{Z}_2$, respectively.

\begin{table}[t]
\centering
\begin{tabular}{|c|rr|}
\hline
 & $\texttt{id}$ & $\texttt{sgn}$  \\
\hline
$[e]$ & 1 & 1  \\
$[\sigma]$ & 1 & -1 \\
\hline
\end{tabular}
\caption{The character table of $\mathbb{Z}_2$.}
\label{tab:Z2chartab}
\end{table}

This observation prompts us to work in the alternative basis:
\beqn
f_0:&=\Theta_{id}\cdot \xi_0=\Theta_{id}\cdot \xi_1=h\xi_0=\xi_0+\xi_1,\nonumber\\
f_1:&=\Theta_{sgn}\cdot \xi_0=-\Theta_{sgn}\cdot \xi_1=h\xi_1=\xi_0-\xi_1.
\eqn
In this basis the permutation matrix is diagonal:
\beqn
\widehat{K}:&=hKh^{-1}=\left(
\begin{array}{rr}
	1 & 0 \\
	0  & -1 \\
\end{array}\right),\nonumber\\
\widehat{Q}:&=-\id+\widehat{K}=\left(
\begin{array}{rr}
	0 & 0 \\
	0  & -2 \\
\end{array}\right).
\eqn
The representation-theoretic perspective on $\widehat{K}$ is to observe that $id(\sigma)=1$ and $sgn(\sigma)=-1$.

Referring to (\ref{eq:networkrep}), we know that we can write a generic phylogenetic tensor as
\beqn
P=e^{-\lambda}\exp\left(\sum_{\emptyset \neq e\subseteq\lrs{n-1}}\alpha_eK^{(e)}\right)\cdot \delta^{n-1}\pi,\nonumber
\eqn
where $\lambda=\sum_{\emptyset \neq e\subseteq\lrs{n-1}}\alpha_e$.

We index matrix and tensor indices by using $i,j,k=0,1\in\mathbb{Z}_2$ and allow multiplication $\times$ in the ring of integers $\mathbb{Z}$.
The Hadamard matrix then has matrix elements $\left[h\right]^j_i=(-1)^{i\times j}$ where $j$ is the row index and $i$ is the column index.
Observe that in the diagonal basis, the permutation matrix has elements
\beqn
\left[\widehat{K}\right]_i^j=\delta_{ij}(-1)^i.\nonumber
\eqn
Thus we have expressions such as
\beqn
\lrs{\widehat{K}^{(\{2,3\})}}_{i_1i_2i_3}^{j_1j_2j_3}=\delta_{i_1j_1}\delta_{i_2j_2}\delta_{i_3j_3}(-1)^{i_2+i_3},\nonumber
\eqn
where $\widehat{K}^{(\{2,3\})}=\id \otimes \widehat{K}\otimes \widehat{K}$.

As we are dealing with tensors of arbitrary size, it is convenient to represent a string such as $i_1i_2\ldots i_n$ as an \emph{ordered-bipartition} $\mu=\mu_0\sep\mu_1$ of the set $\lrs{n}$, where $\mu_0,\mu_1\subseteq \lrs{n}$ with $j\in \mu_k$ if and only if $i_j=k$.
For example we have the following equivalences:
\beqn
00110\equiv \{1,2,5\}\sep\{3,4\},\qquad 01111\equiv \{1\}\sep\{2,3,4,5\},\qquad 10001\equiv \{2,3,4\}\sep\{1,5\},\nonumber
\eqn
and inequivalence:
\beqn
01010\equiv \{1,3,5\}\sep\{2,4\}\neq \{2,4\}\sep\{1,3,5\}\equiv 10101.\nonumber
\eqn
We then have 
\beqn
\lrs{\widehat{K}^{(e)}}_{i_1i_2\ldots i_n}^{j_1j_2\ldots j_n}=\lrs{\widehat{K}^{(e)}}_{\mu}^{\nu}=\lrs{\widehat{K}^{(e)}}_{\mu_0\sep\mu_1}^{\nu_0\sep\nu_1}=\delta_{\mu_0\nu_0}\delta_{\mu_1\nu_1}(-1)^{|e\cap \mu_1|}.\nonumber
\eqn

Defining $h^{(n)}:=h^{(n-1)}\otimes h$ where $h^{(1)}:=h$, in the diagonal basis $\widehat{P}:=h^{(n)}\cdot P$ and using our notation $h^{(n)}$ has tensor components
\beqn
\lrs{h^{(n)}}_{\mu}^{\nu}=\lrs{h^{(n)}}_{\mu_0\sep\mu_1}^{\nu_0\sep\nu_1}=\lrs{h^{(n)}}_{i_1i_2\ldots i_n}^{j_1j_2\ldots j_n}=(-1)^{i_1\times j_1+i_2\times j_2+\ldots +i_n\times j_n}=(-1)^{|\mu_1\cap \nu_1|}.\nonumber
\eqn
The zero edge-length star-tree initial distribution has tensor components
\beqn
\lrs{\delta^{n-1}\pi}_{i_1i_2\ldots i_n}=\fra{1}{2}\delta_{i_1i_2}\delta_{i_1i_3}\ldots \delta_{i_1i_n},\nonumber
\eqn
(where, although it seems we have given preference to the taxa 1 in this expression, there are many ways that this distribution can be expressed using the $\delta_{ij}$).
In the diagonal basis with $\widehat{\delta^{n-1}\pi}:=h^{(n)}\cdot \delta^{n-1}\pi$,  we have components
\beqn
\lrs{\widehat{\delta^{n-1}\pi}}_{i_1i_2\ldots i_n}&=\fra{1}{2}\sum_{j_1,j_2,\ldots,j_n}(-1)^{i_1\times j_1+i_2\times j_2+\ldots +i_n\times j_n}\delta_{j_1j_2}\delta_{j_1j_3}\ldots \delta_{j_1j_n}\\
&=\fra{1}{2}\sum_{j_1}(-1)^{\left(i_1+i_2+\ldots +i_n\right)\times j_1}\\
&=\fra{1}{2}\left(1+(-1)^{i_1+i_2+\ldots +i_n}\right),\nonumber
\eqn
which is exactly the statement
\beqn
\lrs{\widehat{\delta^{n-1}\pi}}_{\mu}=\lrs{\widehat{\delta^{n-1}\pi}}_{\mu_0\sep\mu_1}=\fra{1}{2}\left(1+(-1)^{|\mu_1|}\right).\nonumber
\eqn

Since $\widehat{K}$ is diagonal in the transformed basis, we can conclude that
\beqn
\lrs{\widehat{P}}_{\mu}=\lrs{\widehat{P}}_{\mu_0\sep\mu_1}=e^{-\lambda}\exp\left(\sum_{\emptyset \neq e\subseteq \lrs{2,n}}\alpha_e\lrs{\widehat{K}^{(e)}}^{\mu_0\sep\mu_1}_{\mu_0\sep\mu_1}\right)\fra{1}{2}\left(1+(-1)^{|\mu_1|}\right).\nonumber
\eqn
Of course many of these tensor components will be zero and we would like to ignore these.

Take $u=u_0\sep u_1$ as an ordered bipartition of the reduced set $\lrs{n-1}$, so that $u\equiv i_1i_2\ldots i_{n-1}$ where $j\in u_k$  if and only if $i_j=k$, and define
\beqn
\gamma(u)&=\left\{\begin{array}{ll} 0, \text{ if }|u_1|\text{ is even,}\\1,\text{ if }|u_1|\text{ is odd;}\end{array}\right.\\
&=2-\left(0|u_0|+1|u_1|\right)\text{ (mod 2)},\nonumber
\eqn
and interpret $u\cdot \gamma(u)$ as a string: $u\cdot \gamma(u)= i_1i_2\ldots i_{n-1}\gamma(u)$.

If we make the definitions
\beqn
\mathcal{P}_{u}:=\lrs{\widehat{P}}_{u\cdot \gamma(u)},\qquad \eta_{u}:=\fra{1}{2}\sum_{\emptyset \neq e\subseteq \lrs{n-1}}\alpha_{e}\lrs{\widehat{K}^{(e)}}^{u\cdot \gamma(u)}_{u\cdot \gamma(u)},\nonumber
\eqn
then we can write the non-zero components as
\beqn
\mathcal{P}_u=e^{-\lambda}\exp\left(\eta_{u}\right),\nonumber
\eqn
with inverses
\beqn\label{eq:binsymm1stinv}
\eta_{u}=\ln\left(\mathcal{P}_u\right)+\lambda.
\eqn
This is the first part of the inversion.

We would like to go further and actually recover the individual edge weights $\alpha_e$.
To do this we define the (square) $2^{n-1}\times 2^{n-1}$ matrix $F$ with components
\beqn
\lrs{F}_u^e:=\lrs{\widehat{K}^{(e)}}^{u\cdot \gamma(u)}_{u\cdot \gamma(u)}=(-1)^{|e\cap u|}=\lrs{h^{(n-1)}}^{e}_u,\nonumber
\eqn
with $e$ a subset and $u$ an ordered-bipartition of $\lrs{n-1}$.
As $(h^{(n-1)})^2=\fra{1}{2^{n-1}}\id$, we see that $F$ provides its own inverse $F^{-1}$ with components
\beqn
\lrs{F^{-1}}_e^u:=\fra{1}{2^{n-1}}\lrs{F}_{u}^e.\nonumber
\eqn

Defining the column vectors $\vec{\alpha}=\left\{\alpha_e\right\}$ and $\vec{\eta}=\left\{\eta_{u}\right\}$, we can write the matrix equations
\beqn
\vec{\eta}=F\vec{\alpha},\qquad \vec{\alpha}=F^{-1}\vec{\eta}.\nonumber
\eqn
Together with the first part of the inversion (\ref{eq:binsymm1stinv}), these equations give a one-one map between pattern probabilities and edge weights for the binary-symmetric model.

\section{Inversion of the $\mathbb{Z}_3$ model}\label{sec:Z3}

Taking confidence from the previous case we now discuss the inversion of the group-based phylogenetic model with $G=\mathbb{Z}_3$.
We take $\mathbb{Z}_3=\{0,1,2\}_{+\text{ (mod 3)}}\cong\lra{\sigma|\sigma^3=\epsilon}$ and, by analogy to the $\mathbb{Z}_2$ case, index tensors with indices $i,j=0,1,2$ and allow multiplication $\times$ by extending $\mathbb{Z}_3$ to the ring $\mathbb{F}_3=\{0,1,2\}_{+,\times \text{ (mod 3)}}$.

In this case a generic rate matrix is given by
\beqn
Q&=\left(\begin{array}{ccc}
-(\alpha+\beta) & \beta & \alpha \\
\alpha & -(\alpha+\beta) & \beta \\
\beta & \alpha & -(\alpha+\beta) \\
\end{array}\right)\\
&=-\left(\alpha+\beta\right)\id+\alpha K_1+\beta K_2,\nonumber
\eqn
where 
\beqn
K_1=\left(\begin{array}{ccc}
0 & 0 & 1 \\
1 & 0 & 0 \\
0 & 1 & 0 \\
\end{array}\right),\qquad
K_2=\left(\begin{array}{ccc}
0 & 1 & 0 \\
0 & 0 & 1 \\
1 & 0 & 0 \\
\end{array}\right),\nonumber
\eqn
are the matrices representing the permutations $\sigma\cong (123)$ and $\sigma^2\cong (132)$ under the regular representation, respectively.

We define $\omega=e^{{2\pi i}/{3}}$, and present the character table of $\mathbb{Z}_3$ is given in Table~\ref{tab:Z3chartab}.
The decomposition of the regular representation is $\rho_{\text{reg}}=id\oplus\omega\oplus\omega^2$, and the columns of the character table give the projection operators onto the (one-dimensional) irreducible subspaces:
\beqn
\Theta_{id}:&=\fra{1}{3}\left(\epsilon+\sigma+\sigma^2\right)\\\nonumber
\Theta_{\omega}:&=\fra{1}{3}\left(\epsilon+\omega\sigma+\omega^2\sigma^2\right)\\\nonumber
\Theta_{\omega^2}:&=\fra{1}{3}\left(\epsilon+\omega^2\sigma+\omega\sigma^2\right)\\\nonumber
\eqn
Therefore, the matrix
\beqn
f=\left(\begin{array}{ccc}
1 & 1 & 1 \\
1 & \omega & \omega^2 \\
1 & \omega^2 & \omega \\
\end{array}\right),\nonumber
\eqn
diagonalizes the generic rate matrix for this model:
\beqn
\widehat{Q}=fQf^{-1}=
\left(\begin{array}{ccc}
0 & 0 & 0 \\
0 & \alpha\omega+\beta\omega^2 & 0 \\
0 & 0 & \alpha\omega^2+\beta\omega \\
\end{array}\right),\nonumber
\eqn
or, equivalently,
\beqn
\widehat{K}_1=fK_1 f^{-1}=
\left(\begin{array}{ccc}
1 & 0 & 0 \\
0 & \omega & 0 \\
0 & 0 & \omega^2 \\
\end{array}\right),\qquad
\widehat{K}_2=fK_2 f^{-1}=
\left(\begin{array}{ccc}
1 & 0 & 0 \\
0 & \omega^2 & 0 \\
0 & 0 & \omega \\
\end{array}\right).\nonumber
\eqn

\begin{table}[t]
\centering
\begin{tabular}{|c|ccc|}
\hline
 & $\texttt{id}$ & $\omega$ & $\omega^2$  \\
\hline
$[e]$ & 1 & 1 & 1  \\
$[\sigma]$ & 1 & $\omega$ & $\omega^2$ \\
$[\sigma^2]$ & 1 & $\omega^2$ & $\omega$ \\
\hline
\end{tabular}
\caption{The character table of $\mathbb{Z}_3$.}
\label{tab:Z3chartab}
\end{table}

We recall our basic result (\ref{eq:networkrep}) that for group-based models, a generic phylogenetic tensor can be expressed as
\beqn
P=e^{-\lambda}\exp\left(\sum_{\emptyset \neq e\subseteq\lrs{n-1}} \left(\alpha_{e}K^{(e)}_1+\beta_{e}K^{(e)}_2\right)\right)\cdot \delta^{n-1}\pi,\nonumber
\eqn
where $\lambda=\sum_{\emptyset \neq e\subseteq \lrs{n-1}}\left(\alpha_e+\beta_e\right)$.
We take the stationary distribution as initial distribution, so $\pi=(\fra{1}{3},\fra{1}{3},\fra{1}{3})^T$.

The matrix elements of $f$ can be expressed as $\left[f\right]_i^j=\omega^{i\times j}$, where we extend $i,j\in\mathbb{Z}_3$ to include multiplication $\times$ from the ring of integers $\mathbb{Z}$.
Similarly,
\beqn
\lrs{\widehat{K}_1}_i^j=\delta_{ij}\omega^i,\qquad \lrs{\widehat{K}_2}_i^j=\delta_{ij}(\omega^2)^i.\nonumber
\eqn
More generally, tensorial components can be expressed as
\beqn
\lrs{\id\otimes \widehat{K_1}\otimes \widehat{K_1} }_{i_1i_2i_3}^{j_1j_2j_3}=\delta_{i_1j_1}\delta_{i_2j_2}\delta_{i_3j_3}\omega^{i_2+i_3}.\nonumber
\eqn

We represent a string $i_1i_2\ldots i_n$ as an \emph{ordered-tripartition}, $i_1i_2\ldots i_n\equiv \mu=\mu_0\sep\mu_1\sep\mu_2$, of the set $\lrs{n}$, where $j\in \mu_k$ if and only if $i_j=k$.
For example, if we take $n=5$, we have
\beqn
00000&\equiv \{1,2,3,4,5\}\sep\emptyset \sep\emptyset, \nonumber\\
00120&\equiv \{1,2,5\}\sep \{3\}\sep \{4\},\\
01122&\equiv \{1\}\sep \{2,3\}\sep \{4,5\}.
\eqn
Taking $n\!=\!3$, we have
\beqn
\lrs{\widehat{K}_1^{(\{2,3\})}}_\mu^\nu=
\lrs{\id\otimes \widehat{K}_1\otimes \widehat{K}_1}_\mu^\nu=\lrs{\id\otimes \widehat{K}_1\otimes \widehat{K}_1}_{\mu_0\sep \mu_1\sep \mu_2}^{\nu_0\sep \nu_1\sep \nu_2}=\delta_{\mu\nu}\omega^{|\mu_1\cap \{2,3\}|+2|\mu_2\cap \{2,3\}|},\nonumber
\eqn
and in general:
\beqn
\lrs{\widehat{K}^{(e)}_1}_\mu^\nu=\delta_{\mu\nu}\omega^{|e\cap \mu_1|+2|e\cap \mu_2|},\qquad \lrs{\widehat{K}^{(e)}_2}_\mu^\nu=\delta_{\mu\nu}\omega^{|e\cap \mu_2|+2|e\cap \mu_1|}.\nonumber
\eqn

Taking the uniform distribution as initial distribution, the intial star-tree distribution can be written as
\beqn
\lrs{\delta^{n-1}\pi}_{i_1i_2\ldots i_n}=\fra{1}{3}\delta_{i_1i_2}\delta_{i_1i_3}\ldots \delta_{i_1i_n}.\nonumber
\eqn
Defining $f^{(n)}=f^{(n-1)}\otimes f$ where $f^{(1)}=f$, we have
\beqn
\lrs{f^{(n)}}_\mu^\nu=\lrs{f^{(n)}}_{i_1i_2\ldots i_n}^{j_1j_2\ldots j_n}=\lrs{f}_{i_1}^{j_1}\lrs{f}_{i_2}^{j_2}\ldots\lrs{f}_{i_n}^{j_n} =\omega^{i_1\times j_1+i_2\times j_2+\ldots +i_n\times j_n},\nonumber
\eqn
and in the transformed basis, where $\widehat{\delta^{n-1}\pi}:=f^{(n)}\cdot \delta^{n-1}\pi$, we have
\beqn
\lrs{\widehat{\delta^{n-1}\pi}}_{i_1i_2\ldots i_n}&=\fra{1}{3}\sum_{j_1,j_2,\ldots , j_n}\omega^{i_1\times j_1+i_2\times j_2+\ldots +i_n\times j_n}\delta_{j_1j_2}\delta_{j_1j_3}\ldots\delta_{j_1j_n}\nonumber\\
&=\fra{1}{3}\sum_{j_1}\omega^{j_1\times (i_1+i_2+\ldots+i_n)}\\
&=\fra{1}{3}\left(1+\omega^{i_1+i_2+\ldots +i_n}+(\omega^2)^{i_1+i_2+\ldots +i_n}\right).
\eqn
Indexing by ordered-tripartitions, we conclude that
\beqn
\lrs{\widehat{\delta^{n-1}\pi}}_\mu&=\fra{1}{3}\left(1+\omega^{i_1+i_2+\ldots +i_n}+(\omega^2)^{i_1+i_2+\ldots +i_n}\right)\\
&=\fra{1}{3}\left(1+\omega^{|\mu_1|+2|\mu_2|}+(\omega^2)^{|\mu_1|+2|\mu_2|}\right).\nonumber
\eqn

Now suppose $|\mu_1|+2|\mu_2|=0$ (mod 3), then
\beqn
\lrs{\widehat{\delta^{n-1}\pi}}_\mu=\fra{1}{3}\left(1+1+1\right)=1.\nonumber
\eqn
If $|\mu_1|+2|\mu_2|=1$ (mod 3), then 
\beqn
\lrs{\widehat{\delta^{n-1}\pi}}_\mu=\fra{1}{3}\left(1+\omega+\omega^2\right)=0,\nonumber
\eqn
and if $|\mu_1|+2|\mu_2|=2$ (mod 3), then
\beqn
\lrs{\widehat{\delta^{n-1}\pi}}_\mu=\fra{1}{3}\left(1+\omega^2+\omega\right)=0.\nonumber
\eqn
Thus we have found a basis where all the elements of the initial star-tree tensor are zero \emph{unless} the tripartion $\mu$ satisfies $|\mu_1|+2|\mu_2|=0$ (mod 3).
Crucially, this statement also holds for the phylogenetic tensor $\widehat{P}$ because in this basis the rate matrices of this model are diagonal:
\beqn
\lrs{\widehat{P}}_{\mu}&=\lrs{\widehat{P}}_{\mu_0\sep \mu_1\sep \mu_2}\\
&=e^{-\lambda}\exp\left(\fra{1}{2}\sum_{\emptyset \neq e \subseteq \lrs{n-1}}\lrs{\alpha_e K^{(e)}_1+\beta_{e}K^{(e)}_2 }^{\mu_0\sep \mu_1\sep \mu_2}_{\mu_0\sep \mu_1\sep \mu_2}\right)\fra{1}{3}\left(1+\omega^{1|\mu_1|}+\omega^{2|\mu_2|}\right).\nonumber
\eqn

%

We deal with this condition on $\mu$ by taking $u=u_0\sep u_1\sep u_2$ as an ordered-tripartion of the reduced set $\lrs{n-1}$ and setting $\mu=u\cdot \gamma(u)$ (considered as the concatenation of strings) where 
\beqn
\gamma(u)&=\left\{\begin{array}{ll} 0, \text{ if }|u_1|+2|u_2|=0\\1, \text{ if }|u_1|+2|u_2|=2 \\ 2; \text{ if }|u_1|+2|u_2|=1\end{array}\right.\nonumber\\
&=3-\left(0|u_0|+1|u_1|+2|u_2|\right)\text{ (mod 3)}.
\eqn
If we make the definitions
\beqn
\mathcal{P}_u:=\lrs{\widehat{P}}_{u\cdot \gamma(u)},\qquad\eta_{u}:=\lrs{\sum_{\emptyset \neq e\subseteq\lrs{n-1}}\alpha_eK_1^{(e)}+\beta_eK^{(e)}_2}^{u\cdot \gamma(u)}_{u\cdot \gamma(u)},\nonumber\\
\eqn
we then have the first part of the inversion
\beqn\label{eq:1stpartZ3}
\mathcal{P}_u=e^{-\lambda}\exp\left(\eta_u\right),\qquad \eta_u=\ln\left(\mathcal{P}_u\right)+\lambda.
\eqn

As in the $\mathbb{Z}_2$ case, we would like to use $\eta_u$ to recover the rate parameters $\alpha_e,\beta_e$ for all $\emptyset \neq e\subseteq \lrs{n-1}$ and thus complete the full inversion for this model.
Of course, it is little bit more difficult this time.

Recall that $\mu=\mu_0\sep \mu_1\sep \mu_2$ with $\mu_i\subseteq \lrs{n}$, whereas $u=u_0\sep u_1\sep u_2$ with $u_i\subseteq \lrs{n-1}$, and $\emptyset \neq e\subseteq \lrs{n-1}$.
Considering
\beqn
\lrs{K^{(e)}_1}^{\mu}_{\mu}&=\omega^{|e\cap \mu_1|+2|e\cap \mu_2|},\nonumber\\
\eqn
it follows that
\beqn
\lrs{K^{(e)}_1}^{u\cdot \gamma(u)}_{u\cdot \gamma(u)}&=\omega^{|e\cap u_1|+2|e\cap u_2|},\nonumber
\eqn
and similarly
\beqn
\lrs{K^{(e)}_2}^{u\cdot \gamma(u)}_{u\cdot \gamma(u)}=\omega^{|e\cap u_2|+2|e\cap u_1|}.\nonumber
\eqn
We make the observation that
\beqn
\lrs{F_1}_u^e:=\lrs{f^{(n-1)}}_{u_0\sep u_1\sep u_2}^{{e}^c\sep e\sep \emptyset}=\omega^{|u_1\cap e|+2|u_2\cap e|}=\lrs{K^{(e)}_\alpha}^{u\cdot \gamma(u)}_{u\cdot \gamma(u)},\nonumber
\eqn
and
\beqn
\lrs{F_2}_u^e:=\lrs{f^{(n-1)}}_{u_0\sep u_1\sep u_2}^{e^c\sep \emptyset\sep e}=\omega^{|u_2\cap e|+2|u_1\cap e|}=\lrs{K^{(e)}_\beta}^{u\cdot \gamma(u)}_{u\cdot \gamma(u)},\nonumber
\eqn
where $F_1$ and $F_2$ are $2^{n-1}\times 3^{n-1}$ matrices.

Thus we may write
\beqn
\eta_u=\sum_{\emptyset \neq e\subseteq \lrs{n-1}}\alpha_e \lrs{F_1}_{u}^e+\beta_{e}\lrs{F_2}_{u}^e.\nonumber
\eqn
Defining the column vectors $\vec{\alpha}=\{\alpha_e\},\vec{\beta}=\{\beta_e\}$ and $\vec{\eta}=\{\eta_{u}\}$, we can write
\beqn
\vec{\eta}=F_1\vec{\alpha}+F_2\vec{\beta},\nonumber
\eqn
and define two $3^{n-1}\times 2^{n-1}$ matrices $G_1$ and $G_2$ as
\beqn
\lrs{G_1}_e^u:&=\lrs{{f^{-1}}^{(n-1)}}_{{e}^c\sep e\sep \emptyset}^{u},\nonumber\\
\lrs{G_2}_e^u:&=\lrs{{f^{-1}}^{(n-1)}}_{e^c\sep \emptyset\sep e}^{u},
\eqn
where
\beqn
f^{-1}=\left(\begin{array}{ccc}
1 & 1 & 1 \\
1 & \omega & \omega^2 \\
1 & \omega^2 & \omega \\
\end{array}\right),\nonumber
\eqn
with $f f^{-1}=\id$.

Considering that
\beqn
\sum_{v}\lrs{{f^{-1}}^{(n-1)}}_{u}^v\lrs{f^{(n-1)}}_{v}^{w}=\delta_{uw},\nonumber
\eqn
for all ordered-triparitions $u,w$ of $\lrs{n-1}$, we have the matrix products
\beqn
\begin{array}{cc}
G_1  F_1=\id, & G_1 F_2 =0, \\
G_2  F_2=\id, & G_2  F_1 =0.\nonumber
\end{array}
\eqn
Thus the second part of the inversion for this model is
\beqn
\vec{\alpha}=G_1\vec{\eta},\qquad \vec{\beta}=G_2\vec{\eta}.\nonumber
\eqn
Together with (\ref{eq:1stpartZ3}), these equations give a one-one map between pattern probabilities and edge weights for the group-based model with $G=\mathbb{Z}_3$.

\section{Inversion of the K3ST model}\label{sec:Z2Z2}

We now consider the K3ST model \citep{kimura1981} which occurs as the group-based model with $G=\mathbb{Z}_2\times \mathbb{Z}_2=\{(0,0),(0,1),(1,0),(1,1)\}_{+\text{ (mod 2)}}\cong \lra{(12)(34),(13)(24)}$.
In this model a generic rate matrix is given by
\beqn
Q=-\left(\alpha+\beta+\gamma\right)\id+\alpha K_{01}+\beta K_{10}+\gamma K_{11},\nonumber
\eqn
where 
\beqn\label{eq:k3stperm}
K_{01}&=\id\otimes K=\left(\begin{array}{cccc}
0 & 1 & 0 & 0 \\
1 & 0 & 0 & 0 \\
0 & 0 & 0 & 1 \\
0 & 0 & 1 & 0 
\end{array}\right),\quad
K_{10}=K\otimes \id=\left(\begin{array}{cccc}
0 & 0 & 1 & 0 \\
0 & 0 & 0 & 1 \\
1 & 0 & 0 & 0 \\
0 & 1 & 0 & 0 
\end{array}\right),\\
K_{11}&=K\otimes K=\left(\begin{array}{cccc}
0 & 0 & 0 & 1 \\
0 & 0 & 1 & 0 \\
0 & 1 & 0 & 0 \\
1 & 0 & 0 & 0 
\end{array}\right).
\eqn
We already know that the $2\times 2$ Hadamard matrix $h$ diagonalizes $K$, so we see immediately that $H=h\otimes h$ diagonalizes this model:
\beqn
\widehat{K}_{01}:&=HK_{01} H^{-1}=h\otimes h \cdot \id\otimes K \cdot h^{-1}\otimes h^{-1}=\id\otimes hKh^{-1}=
\left(\begin{array}{cccc}
1 & 0 & 0 & 0 \\
0 & -1 & 0 & 0 \\
0 & 0 & 1 & 0 \\
0 & 0 & 0 & -1 
\end{array}\right),\\
\widehat{K}_{10}:&=HK_{10} H^{-1}=\left(\begin{array}{cccc}
1 & 0 & 0 & 0 \\
0 & 1 & 0 & 0 \\
0 & 0 & -1 & 0 \\
0 & 0 & 0 & -1 
\end{array}\right),\qquad
\widehat{K}_{11}:=HK_{11} H^{-1}=\left(\begin{array}{cccc}
1 & 0 & 0 & 0 \\
0 & -1 & 0 & 0 \\
0 & 0 & -1 & 0 \\
0 & 0 & 0 & 1 
\end{array}\right).\nonumber
\eqn

Of course $H$ is the character table of $\mathbb{Z}_2\times \mathbb{Z}_2$ and the permutation matrices (\ref{eq:k3stperm}), together with $K_{00}:=\id$, give the regular representation $\rho_{\text{reg}}\cong id\otimes id\oplus id\otimes sgn\oplus sgn\otimes id\oplus sgn\otimes sgn$, where we recall the basic result that the tensor product of two irreducible representations of a group $G$ gives an irreducible representation of $G\times G$.

Simplifying notation, for this model we index tensors with indices given as pairs: $i,j=00,01,10,11\in \mathbb{Z}_2\times \mathbb{Z}_2$; and we express the individual parts using lower case Roman characters.
For example, we write $i:=ab=01$, with $a=0$ and $b=1$.
This gives matrix elements:
\beqn
\lrs{\widehat{K}_{01}}_{ab}^{cd}&=\delta_{ac}\delta_{bd}(-1)^b,\nonumber\\
\lrs{\widehat{K}_{10}}_{ab}^{cd}&=\delta_{ac}\delta_{bd}(-1)^a,\\
\lrs{\widehat{K}_{11}}_{ab}^{cd}&=\delta_{ac}\delta_{bd}(-1)^{a+b},
\eqn
and more complicated tensor products such as
\beqn
\lrs{\widehat{K}_{01}\otimes \widehat{K}_{01}\otimes \id}_{a_1b_1a_2b_2 a_3b_3}^{c_1d_1c_2d_2c_3d_3}=\delta_{a_1c_1}\delta_{b_1d_1}\delta_{a_2c_2}\delta_{b_2d_2}\delta_{a_3c_3}\delta_{b_3d_3}(-1)^{b_1+b_2}.\nonumber
\eqn
Again we interpret strings such as $\mu\equiv a_1a_2\ldots a_n$ and $\nu\equiv b_1b_2\ldots b_n$ as ordered-bipartitions $\mu=\mu_0\sep \mu_1$ and $\nu=\nu_0\sep \nu_1$ of the set $\lrs{n}$.
We can then write matrix elements of tensor products as
\beqn
\lrs{\widehat{K}^{(e)}_{01}}_{\mu,\nu}^{\mu',\nu'}&=\delta_{\mu\mu'}\delta_{\nu\nu'}(-1)^{|e\cap \nu_1|},\\\nonumber
\lrs{\widehat{K}^{(e)}_{10}}_{\mu,\nu}^{\mu',\nu'}&=\delta_{\mu\mu'}\delta_{\nu\nu'}(-1)^{|e\cap \mu_1|},\\
\lrs{\widehat{K}^{(e)}_{11}}_{\mu,\nu}^{\mu',\nu'}&=\delta_{\mu\mu'}\delta_{\nu\nu'}(-1)^{|e\cap \mu_1|+|e\cap \nu_1|}.
\eqn

Taking the stationary distribution $\pi=\frac{1}{4}(1,1,1,1)^T$ as initial distribution, the zero edge-length star-tree distribution is given by
\beqn
\lrs{\delta^{n-1}\pi}_{i_1i_2\ldots i_n}=\fra{1}{4}\delta_{i_1i_2}\delta_{i_1i_3}\ldots \delta_{i_1i_n},\nonumber
\eqn
which in the finer index representation is
\beqn
\lrs{\delta^{n-1}\pi}_{a_1b_1a_2b_2\ldots a_nb_n}=\fra{1}{4}\delta_{a_1a_2}\delta_{a_1a_3}\ldots \delta_{a_1a_n}\delta_{b_1b_2}\delta_{b_1b_3}\ldots \delta_{b_1b_n}.\nonumber
\eqn

Recall that elements of the Hadamard matrix can be written as $\lrs{h}^a_b=(-1)^{a\times b}$, where $a,b\in\mathbb{Z}_2$ and we allow multiplication $\times$ by extending to the ring of integers $\mathbb{Z}$.
In the transformed basis, we have
\beqn
\lrs{\widehat{{\delta^{n-1}\pi}}}_{a_1b_1a_2b_2\ldots a_nb_n}&=\lrs{\widehat{\delta^{n-1}\pi}}_{\mu,\nu}\\
&\hspace{-8em}=\fra{1}{4}\sum_{c_1,c_2,\ldots,c_n}^{d_1,d_2,\ldots ,d_n}\lrs{h}^{a_1}_{c_1}\lrs{h}^{a_2}_{c_2}\ldots \lrs{h}^{a_n}_{c_n}\lrs{h}^{b_1}_{d_1}\lrs{h}^{b_2}_{d_2}\ldots \lrs{h}^{b_n}_{d_n}\delta_{a_1a_2}\delta_{a_1a_3}\ldots \delta_{a_1a_n}\delta_{b_1b_2}\delta_{b_1b_3}\ldots \delta_{b_1b_n}\\
&\hspace{-8em}=\fra{1}{4}\sum_{c_1,d_1}(-1)^{(a_1+a_2+\ldots a_n)\times c_1+(b_1+b_2+\ldots +b_n)\times d_1}\\
&\hspace{-8em}=\fra{1}{4}\left(1+(-1)^{a_1+a_2+\ldots +a_n}+(-1)^{b_1+b_2+\ldots +b_n}+(-1)^{a_1+a_2+\ldots +a_n+b_1+b_2+\ldots +b_n}\right)\\
&\hspace{-8em}=\left\{\begin{array}{ll} 0, \text{ if either }|\mu_1|\text{ or }|\nu_1|\text{ is odd};\\1, \text{ if }|\mu_1|\text{ and }|\nu_1|\text{ are both even}.\end{array}\right.\nonumber
\eqn

We recall (\ref{eq:networkrep}), so under this model we can express a generic phylogenetic tensor as
\beqn
P=e^{-\lambda}\exp\left(\sum_{\emptyset \neq e\subseteq \lrs{n-1}}\alpha_{e}K^{(e)}_{01}+\beta_{e}K^{(e)}_{10}+\gamma_{e}K^{(e)}_{11}\right) \cdot \delta^{n-1}\pi.\nonumber
\eqn 

To exclude the vanishing components we define, for all ordered bipartitions $u=u_0\sep u_1$ of the reduced set $\lrs{n-1}$,
\beqn
\gamma(u)&=\left\{\begin{array}{ll} 0, \text{ if }|u_1|\text{ is even},\\1, \text{ if }|u_1|\text{ is odd};\end{array}\right.\\
&=2-(0|u_0|+1|u_1|) \text{ (mod 2)},\nonumber
\eqn
and intepret $u\cdot \gamma(u)$ as the string $u\cdot \gamma(u)=a_1a_2\ldots a_{n-1}\gamma(u)$.
Then, for each pair $u,v$ of ordered-bipartitions of $\lrs{n-1}$, we define 
\beqn
\eta_{u,v}:=\lrs{\sum_{\emptyset \neq e\subseteq \lrs{n-1}}\alpha_{e}K^{(e)}_{01}+\beta_{e}K^{(e)}_{10}+\gamma_{e}K^{(e)}_{11}}^{u\cdot \gamma(u),v\cdot \gamma(v)}_{u\cdot \gamma(u),v\cdot \gamma(v)},\nonumber
\eqn
and
\beqn
\mathcal{P}_{u,v}:=\lrs{P}_{u\cdot \gamma(u),v\cdot \gamma(v)},\nonumber
\eqn
This gives the inversion
\beqn
\mathcal{P}_{u,v}&=e^{-\lambda}\exp\left(\eta_{u,v}\right),\\\nonumber
\eta_{u,v}&=\lambda+\ln\left(\mathcal{P}_{u,v}\right).
\eqn

Consider the $2^n\times 2^{n-1}$ rectangular matrices $F_{01}$, $F_{10}$ and $F_{11}$ with components 
\beqn
\begin{array}{ll}
\lrs{F_{01}}_{u,v}^e=\lrs{K^{(e)}_{01}}^{u,v}_{u,v}=(-1)^{|e\cap v_1|}, & \lrs{F_{10}}_{u,v}^e=\lrs{K^{(e)}_{10}}^{u,v}_{u,v}=(-1)^{|e\cap u_1|},\\
\lrs{F_{11}}_{u,v}^e=\lrs{K^{(e)}_{11}}^{u,v}_{u,v}=(-1)^{|e\cap u_1|+|e\cap v_1|};
\end{array}\nonumber
\eqn
where $e\subseteq \lrs{n-1}$ and $u=u_0\sep u_1$ and $v=v_0\sep v_1$ are ordered-bipartitions of $\lrs{n-1}$.
If we define the column vector $\vec{\eta}:=\{\eta_{u,v}\}$ indexed by pairs of ordered-bipartitions and the column vectors $\vec{\alpha}:=\{\alpha_e\}$, $\vec{\beta}:=\{\alpha_{e}\}$ and $\vec{\gamma}:=\{\alpha_{e}\}$ indexed by subsets of $\lrs{n-1}$, we then have the matrix equation
\beqn
\vec{\eta}=F_{01}\vec{\alpha}+F_{10}\vec{\beta}+F_{11}\vec{\gamma}.\nonumber
\eqn
Writing $H^{(n)}=H^{(n-1)}\otimes H$ with $H^{(1)}=H$, we note that
\beqn
\lrs{F_{01}}_{u,v}^e&=\lrs{H^{(n-1)}}_{u,v}^{\emptyset,e},\\\nonumber
\lrs{F_{10}}_{u,v}^e&=\lrs{H^{(n-1)}}_{u,v}^{e,\emptyset},\\
\lrs{F_{11}}_{u,v}^e&=\lrs{H^{(n-1)}}_{u,v}^{e,e};
\eqn
and define the $2^{n-1}\times 2^n$ rectangular matrices  $G_{01},G_{10}$ and $G_{11}$ as
\beqn
\lrs{G_{01}}_e^{u,v}&=\lrs{{H^{-1}}^{(n-1)}}_{\emptyset,e}^{u,v},\nonumber\\
\lrs{G_{10}}_e^{u,v}&=\lrs{{H^{-1}}^{(n-1)}}_{e,\emptyset}^{u,v},\\
\lrs{G_{11}}_e^{u,v}&=\lrs{{H^{-1}}^{(n-1)}}_{e,e}^{u,v}.
\eqn
Noting that
\beqn
\sum_{w,x}\lrs{{H^{-1}}^{(n-1)}}_{u,v}^{w,x}\lrs{H^{(n-1)}}_{w,x}^{y,z}=\delta_{u,y}\delta_{v,z},\nonumber
\eqn
for all $u,v,y,z$ ordered-bipartitions of $\lrs{n-1}$,  we then have the matrix identities
\beqn
G_{01} F_{01}=\id,\qquad G_{10} F_{10}=\id, \qquad G_{11} F_{11}=\id,\nonumber
\eqn
and
\beqn
G_{01}F_{10}=0=G_{01} F_{11}=G_\beta F_{11}=G_{10} F_{01}=G_{11} F_{01}=G_{11} F_{10}.\nonumber
\eqn

Writing
\beqn
\vec{\alpha}=G_{01} \vec{\eta},\qquad \vec{\beta}=G_{10}\vec{\eta}, \qquad \vec{\gamma}=G_{11}\vec{\eta},\nonumber
\eqn
completes the inversion for the K3ST model.

\section{Inversion of the $\mathbb{Z}_r$ model}\label{sec:Zr}
We now consider the group based model for $\mathbb{Z}_r=\left\{0,1,2,\ldots (r-1)\right\}_{+(\text{mod r})}\cong \lra{\sigma:\sigma^r=e}$.
For this model the generic rate matrix has the form
\beqn
Q=-\lambda\id +\sum_{i=1}^r\alpha^iK_{\sigma^i},\nonumber
\eqn
where $\lambda=\sum_{i=1}^r\alpha^i$ and
\beqn
K_\sigma=\left(\begin{array}{ccccc}
0 & 0 & \ldots & 0 & 1 \\
1 & 0 & 0 & 0 & 0 \\
0 & 1 & \ldots & 0 & 0 \\
\vdots & \vdots  & \ddots & \vdots & \vdots \\
0 & 0 & \ldots & 1 & 0 
\end{array}\right)\nonumber,
\eqn
so that $K_{\sigma^i}=K_{\sigma}^i$.

Defining $\omega=e^{2\pi i/r}$, we have $\omega^r=1$ and $1+\omega+\omega^2+\ldots +\omega^{r-1}=0$ and $\lrs{f}_{i}^{j}=\omega^{ij}$ where $i,j=0,1,2,\ldots,r-1$.
Of course, $f$ is the character table of $\mathbb{Z}_r$ and $\lrs{f^{-1}}^i_j=\fra{1}{r}\omega^{-ij}$.
\begin{lem}
\beqn\label{eq:orthogonality}
\sum_{\nu}\lrs{f\otimes f\otimes \ldots \otimes f}_{\mu}^{\nu}\lrs{f^{-1}\otimes f^{-1}\otimes \ldots \otimes f^{-1}}_{\nu}^{\mu'}=\delta_{\mu\mu'},\nonumber
\eqn
where $\mu,\nu,\mu'$ are ordered-$r$-partitions of the set $\lrs{n}$ corresponding to the strings $i_1i_2\ldots i_n$, $j_1j_2\ldots j_n$ and $k_1k_2\ldots k_n$.
\end{lem}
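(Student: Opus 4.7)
The plan is to reduce the claim to the one-factor identity $f f^{-1} = \id$ (which the paper has already established, since $[f]^j_i = \omega^{ij}$ and $[f^{-1}]^i_j = \tfrac{1}{r}\omega^{-ij}$) by exploiting the compatibility of matrix multiplication with tensor products. Concretely, for any square matrices $A_1,\ldots,A_n$ and $B_1,\ldots,B_n$ of matching sizes one has
\[
(A_1\otimes\cdots\otimes A_n)(B_1\otimes\cdots\otimes B_n)=(A_1 B_1)\otimes\cdots\otimes(A_n B_n),
\]
and applying this with $A_k=f$ and $B_k=f^{-1}$ collapses the left-hand side of the claim to $\id\otimes\id\otimes\cdots\otimes\id$, whose matrix elements are $\delta_{\mu\mu'}$.

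To make this visible at the level of indices (which matches the paper's explicit style), I would first unpack the tensor-product matrix elements via
\[
[f\otimes\cdots\otimes f]^{\nu}_{\mu}=\prod_{\ell=1}^{n}[f]^{j_\ell}_{i_\ell}=\omega^{\sum_\ell i_\ell j_\ell},\qquad
[f^{-1}\otimes\cdots\otimes f^{-1}]^{\mu'}_{\nu}=\prod_{\ell=1}^{n}[f^{-1}]^{k_\ell}_{j_\ell}=\tfrac{1}{r^n}\omega^{-\sum_\ell j_\ell k_\ell},
\]
where $\mu\leftrightarrow i_1\ldots i_n$, $\nu\leftrightarrow j_1\ldots j_n$, $\mu'\leftrightarrow k_1\ldots k_n$. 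The sum over $\nu$ then factorises into a product of independent one-index sums, and each factor reduces to
\[
\sum_{j_\ell=0}^{r-1}[f]^{j_\ell}_{i_\ell}[f^{-1}]^{k_\ell}_{j_\ell}=\tfrac{1}{r}\sum_{j_\ell=0}^{r-1}\omega^{j_\ell(i_\ell-k_\ell)}=\delta_{i_\ell k_\ell},
\]
by the standard orthogonality of the $r$-th roots of unity (which is nothing but the $\ell$-th entry of $f f^{-1}=\id$). Multiplying these $n$ Kronecker deltas gives $\prod_{\ell}\delta_{i_\ell k_\ell}=\delta_{\mu\mu'}$.

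There is no real obstacle here: the content of the lemma is essentially a book-keeping statement that inversion of matrices is compatible with tensor products. The only mild care needed is notational, namely to confirm that the paper's identification of an ordered-$r$-partition $\mu$ of $[n]$ with the string $i_1\ldots i_n$ is compatible with the tensor-index conventions used above, so that $\delta_{i_1 k_1}\cdots\delta_{i_n k_n}$ really does coincide with $\delta_{\mu\mu'}$. Once this is noted, the proof is complete in two or three lines.
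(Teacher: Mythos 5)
Your proposal is correct and follows essentially the same route as the paper: the authors likewise first remark that the claim is immediate from the definition of the tensor product, and then carry out the explicit index computation, writing the sum over $\nu$ as $\fra{1}{r^n}\sum_{j_1,\ldots,j_n}\omega^{j_1(i_1-k_1)+\cdots+j_n(i_n-k_n)}$ and invoking $1+\omega+\cdots+\omega^{r-1}=0$ to kill every term with some $i_\ell\neq k_\ell$. Your factorisation into $n$ independent one-index sums is just a slight reorganisation of the same calculation, so there is nothing to add.
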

\begin{proof}
The result is obvious by the definition of tensor product. 
However, explicitly we have
\beqn
\sum_{\nu}\lrs{f\otimes f\otimes \ldots \otimes f}_{\mu}^{\nu}&\lrs{f^{-1}\otimes f^{-1}\otimes \ldots \otimes f^{-1}}_{\nu}^{\mu'}\\
&=\fra{1}{r^n}\sum_{0\leq j_1,j_2,\ldots ,j_{r-1}\leq (r-1)}\omega^{i_1j_1+i_2j_2+\ldots i_{r-1}j_{r-1}}\omega^{-\left(j_1k_1+j_2k_2+\ldots +j_nk_n\right)}\nonumber\\
&=\fra{1}{r^n}\sum_{0\leq j_1,j_2,\ldots ,j_{r-1}\leq (r-1)}\omega^{j_1(i_1-k_1)+j_2(i_2-k_2)+\ldots +j_n(i_n-k_n)}
\eqn
which clearly equals 1 if $i_{\ell}-k_{\ell}=0$ for all $\ell$, and, by repeatedly applying $1+\omega+\omega^2+\ldots +\omega^{r-1}=0$, equals 0 otherwise.
\end{proof}

The regular representation contains exactly one copy of every irreducible representation and the irreducible representations of $\mathbb{Z}_r$ are given by the powers of $\omega$:
\[\rho_{i}:\begin{array}{cc}\mathbb{Z}_r\rightarrow \mathbb{C} \\ \sigma\mapsto \omega^{i}\end{array}.\] 
Thus the change of basis $K_{\sigma^i}\mapsto \widehat{K}_{\sigma^i}=fK_{\sigma^i}f^{-1}$ will give diagonal matrices $\widehat{K}_{\sigma^i}$.
Additionally,
\begin{lem}
In the diagonal basis, the matrices $\widehat{K}_{\sigma^i}:=fK_{\sigma^i}f^{-1}$ have matrix elements $\lrs{\widehat{K}_{\sigma^s}}_{i}^j=\omega^{is}\delta_{ij}$.
\end{lem}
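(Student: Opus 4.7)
The plan is to verify the formula by direct matrix multiplication, keeping careful track of the indexing conventions already introduced. First, I would write out $K_{\sigma^s}$ explicitly: by the definition of the regular representation (equation~\eqref{eq:Ksigdef}) together with $\sigma^s \cdot \sigma_i = \sigma_{i+s}$ (indices mod $r$), its matrix entries are $[K_{\sigma^s}]^{k}_{\ell} = \delta_{k,\ell+s}$, i.e., a cyclic shift by $s$. Combined with the given formulas $[f]^j_k = \omega^{jk}$ and $[f^{-1}]^{\ell}_i = \tfrac{1}{r}\omega^{-i\ell}$, the definition $\widehat{K}_{\sigma^s} = f K_{\sigma^s} f^{-1}$ unfolds into a double sum over $k$ and $\ell$.

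The key step is then to simplify
\begin{equation*}
[\widehat{K}_{\sigma^s}]^j_i = \sum_{k,\ell} [f]^j_k [K_{\sigma^s}]^k_\ell [f^{-1}]^\ell_i
 = \tfrac{1}{r}\sum_{\ell} \omega^{j(\ell+s)} \omega^{-i\ell}
 = \tfrac{\omega^{js}}{r}\sum_{\ell=0}^{r-1} \omega^{\ell(j-i)}.
\end{equation*}
The inner sum is the standard geometric sum: it equals $r$ when $j \equiv i \pmod r$ and $0$ otherwise (which is essentially the content of the orthogonality lemma just proved, specialised to a single tensor factor). This collapses the expression to $\omega^{js}\delta_{ij} = \omega^{is}\delta_{ij}$, which is precisely the claimed formula.

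As a sanity check, and to connect with the representation-theoretic narrative of the paper, I would note that this matches the decomposition $\rho_{\text{reg}} = \bigoplus_{i=0}^{r-1} \rho_i$ with $\rho_i(\sigma) = \omega^i$: the columns of $f$ (the character table) are precisely the eigenvectors of $K_\sigma$ that project onto the one-dimensional irreducible subspaces (Result~\ref{thm:projectionops}), so conjugation by $f$ is forced to produce a diagonal matrix whose $i$th diagonal entry is $\rho_i(\sigma^s) = \omega^{is}$.

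There is no genuine obstacle here; the statement is essentially a restatement of the fact that the character table diagonalises the regular representation of an abelian group. The only thing to watch is the bookkeeping of row versus column indices in $f$ and $f^{-1}$ and the sign in the exponent of $f^{-1}$, so that the final answer reads $\omega^{is}$ rather than $\omega^{-is}$. Using the convention $[f^{-1}]^\ell_i = \tfrac{1}{r}\omega^{-i\ell}$ fixed just above the lemma, the computation above yields the stated sign.
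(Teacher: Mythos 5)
Your proposal is correct and follows essentially the same route as the paper: a direct computation of $fK_{\sigma^s}f^{-1}$ using the cyclic-shift form of $K_{\sigma^s}$ and the orthogonality of the powers of $\omega$. If anything, your bookkeeping of the $\tfrac{1}{r}$ normalisation from $f^{-1}$ is slightly more careful than the paper's displayed computation, which silently absorbs that factor into the final Kronecker delta.
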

\begin{proof}
Consider the matrix elements $\lrs{K_{\sigma^s}}_{i}^{j}=\delta_{i\sigma^s(j)}$.
Thus
\beqn
\lrs{fK_{\sigma^s}f^{-1}}^i_j=\sum_{k,l}\omega^{ik}\delta_{k\sigma^s(l)}\omega^{-lj}=\sum_{l}\omega^{i\sigma^s(l)-lj}=\sum_{l}\omega^{i(l+s)-lj}&=\omega^{is}\sum_{l}\omega^{l(i-j)}=\omega^{is}\delta_{ij},\nonumber
\eqn
where we have used $\omega^{\sigma^s(m)}=\omega^{m+s}$.
\end{proof}

Now
\beqn
\lrs{\delta^{n-1}\pi}_{i_1i_2\ldots i_n}=\fra{1}{r}\delta_{i_1i_2}\delta_{i_1i_3}\ldots \delta_{i_1i_n},\nonumber
\eqn
and 
\beqn
\lrs{\widehat{\delta^{n-1}\pi}}_{i_1i_2\ldots i_n}&=\fra{1}{r}\sum_{j_1,j_2,\ldots ,j_r}\omega^{i_1j_1+i_2j_2+\ldots +i_nj_n}\delta_{j_1j_2}\delta_{j_1j_3}\ldots \delta_{j_1j_n}\nonumber\\
&=\fra{1}{r}\sum_{j_1}\omega^{j_1(i_1+i_2+\ldots +i_n)}\\
&=\left\{\begin{array}{ll} {1\text{ if }i_1+i_2+\ldots +i_n=0\text{ (mod r)}} \\ {0\text{, otherwise.}}\end{array}\right.
\eqn
Translating this result using the ordered-$r$-partitions for indices, we have
\begin{lem}\label{lem:Z4diagstar}
In the diagonal basis, the uniform initial distribution on the star tree has components
\beqn
\lrs{\widehat{\delta^{n-1}\pi}}_{\mu}=\left\{\begin{array}{ll} {1\text{ if }0|\mu_0|+1|\mu_1|+2|\mu_2|+\ldots +(r-1)|\mu_{r-1}|=0 \emph{ (mod r)}} \\ {0\text{, otherwise.}}\end{array}\right.,\nonumber
\eqn
where $\mu=\mu_0\sep \mu_1\sep \mu_2\sep \ldots \sep \mu_{r-1}$ is an ordered-$r$-partition of the set $\lrs{n}$.
\end{lem}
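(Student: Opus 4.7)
The plan is to reduce the claim directly to the explicit component computation of $\lrs{\widehat{\delta^{n-1}\pi}}_{i_1i_2\ldots i_n}$ that was just carried out in the display immediately above the lemma statement. That computation, obtained by applying the diagonalizing transform $f^{(n)}$ (the tensor product of character tables) to the uniform star-tree tensor and invoking the geometric-series identity $1+\omega+\omega^2+\ldots+\omega^{r-1}=0$, already showed that $\lrs{\widehat{\delta^{n-1}\pi}}_{i_1i_2\ldots i_n}=1$ when $i_1+i_2+\ldots +i_n=0$ (mod $r$) and $0$ otherwise. So all that remains is to translate this vanishing condition from string indices into ordered-$r$-partition indices.

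First, I would unpack the correspondence. Given a string $i_1i_2\ldots i_n$ with $i_j\in\{0,1,\ldots,r-1\}$, the associated ordered-$r$-partition $\mu=\mu_0\sep\mu_1\sep\ldots\sep\mu_{r-1}$ of $\lrs{n}$ is defined by $j\in\mu_k$ if and only if $i_j=k$. In particular each index $j\in\lrs{n}$ lies in exactly one block, namely $\mu_{i_j}$, so reading off $i_j$ is the same as reading off which block of $\mu$ contains $j$.

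Next, I would group the sum $\sum_{j=1}^n i_j$ according to the blocks of $\mu$:
\beqn
i_1+i_2+\ldots +i_n=\sum_{k=0}^{r-1}\sum_{j\in\mu_k}k=\sum_{k=0}^{r-1}k|\mu_k|=0|\mu_0|+1|\mu_1|+2|\mu_2|+\ldots+(r-1)|\mu_{r-1}|.\nonumber
\eqn
Substituting this identity into the previously derived formula for $\lrs{\widehat{\delta^{n-1}\pi}}_{i_1i_2\ldots i_n}$ yields exactly the statement of the lemma.

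There is no substantive obstacle here: the content is essentially a repackaging of the Fourier orthogonality on $\mathbb{Z}_r$ (already used above) into the ordered-$r$-partition notation that will be convenient for the remainder of the section. The only small verification worth making explicit is that the map between strings and ordered-$r$-partitions is a bijection, which is immediate from the definition since each $j\in\lrs{n}$ is assigned to exactly one block $\mu_k$ by the rule $k=i_j$.
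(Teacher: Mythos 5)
Your proposal is correct and matches the paper's approach exactly: the paper derives the string-indexed formula $\lrs{\widehat{\delta^{n-1}\pi}}_{i_1i_2\ldots i_n}$ via the geometric-series identity in the display immediately preceding the lemma and then simply states that the lemma follows by ``translating this result using the ordered-$r$-partitions for indices.'' The only difference is that you spell out the regrouping $i_1+i_2+\ldots+i_n=\sum_{k}k|\mu_k|$ explicitly, which the paper leaves implicit.
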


Again recall that for this model a generic phylogenetic tensor can be written as
\beqn
P=e^{-\lambda}\exp\left(\sum_{\emptyset \neq e\subseteq \lrs{n-1},s\in\lrs{r-1}  } \alpha_e^{s} K^{(e)}_{\sigma^s}\right)\delta^{n-1}\pi,\nonumber
\eqn
where $\pi=\fra{1}{r}(1,1,\ldots,1)^T$.
In the diagonal basis $\widehat{P}:=f^{(n)}\cdot P $ and as a consequence of Lemma~\ref{lem:Z4diagstar} $\widehat{P}$ will have many vanishing components.
To avoid these we take $u=u_0\sep u_1\sep u_2\sep \ldots \sep u_{r-1}$ as an \emph{ordered-$r$-partition} of $\lrs{n-1}$ and set
\beqn
\gamma(u)=r-\left(0|u_0|+1|u_1|+2|u_2|+\ldots +(r-1)|u_{r-1}|\right) \text{ (mod r)}.\nonumber
\eqn
If we define $\mathcal{P}_u:=\lrs{\widehat{P}}_{u\cdot \gamma(u)}$ and
\beqn
\eta_u:=\lrs{\sum_{\emptyset \neq e\subseteq \lrs{n-1},s\in\lrs{r-1}  } \alpha_e^{s} \widehat{K}^{(e)}_{\sigma^s}}_{u\cdot \gamma(u)}^{u\cdot \gamma(u)},\nonumber
\eqn
we then have the first part of the inversion for the $\mathbb{Z}_r$ model:
\beqn
\mathcal{P}_{u}&=e^{-\lambda}\exp\left(\eta_{u}\right),\\\nonumber
\eta_{u}&=\ln\left(\mathcal{P}_{u}\right)+\lambda.
\eqn

For each $i\in\lrs{r-1}$, we define the column vectors $\vec{\alpha}_i :=\left\{\alpha_{e}^{i}\right\}_{\emptyset \neq e\subseteq \lrs{n-1}}$, and, for each $\emptyset \neq e\subseteq \lrs{n-1}$ and $u$ an ordered-$(r-1)$-partition of $\lrs{n-1}$, we define the rectangular $r^{n-1}\times 2^{n-1}$ matrices
\beqn
\begin{array}{llll}
\lrs{F_1}_{u}^e:=\lrs{K^{(e)}_{\sigma}}_{u\cdot \gamma(u)}^{u\cdot \gamma(u)}, & \lrs{F_2}^{e}_u:=\lrs{K^{(e)}_{\sigma^2}}^{u\cdot \gamma(u)}_{u\cdot \gamma(u)}, & \ldots & \lrs{F_{r-1}}_{u}^e:=\lrs{K^{(e)}_{\sigma^{r-1}}}^{u\cdot \gamma(u)}_{u\cdot \gamma(u)},\nonumber
\end{array}
\eqn
so we have the vector equation
\beqn
\eta=F_1\vec{\alpha_1}+F_2\vec{\alpha_2}+\ldots + F_{r-1}\vec{\alpha}_{r-1}.\nonumber
\eqn

We claim that
\begin{lem}\label{lem:blah}
\beqn
\lrs{F_1}^{e}_u&=\lrs{f^{(n-1)}}^{e^c\sep e\sep \emptyset\sep \emptyset\sep \ldots\sep \emptyset}_{ u}, \\
\lrs{F_2}^{e}_u&=\lrs{f^{(n-1)}}^{e^c \sep \emptyset\sep e\sep \emptyset\sep \ldots\sep \emptyset}_{u},\\
&\vdots\\
\lrs{F_{r-1}}^{e}_u&=\lrs{f^{(n-1)}}^{ e^c\sep \emptyset\sep \emptyset\sep \emptyset\sep \ldots\sep e}_{u}.\\
\eqn
\end{lem}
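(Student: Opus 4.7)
The plan is to compute both sides of each claimed equality explicitly by unpacking the ordered-$r$-partition notation into the underlying string indices, and then check that the resulting powers of $\omega$ agree. All three identities have the same structure, so I would prove only the generic case $\lrs{F_s}^{e}_u$ for $s\in\lrs{r-1}$ and note that the three displayed cases are instances.

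First I would use the previous lemma (giving $\lrs{\widehat{K}_{\sigma^s}}_{i}^{j}=\omega^{is}\delta_{ij}$) together with the tensor-product structure of $\widehat{K}^{(e)}_{\sigma^s}$ to obtain
\beqn
\lrs{\widehat{K}^{(e)}_{\sigma^s}}^{\mu}_{\mu}=\omega^{s\sum_{k\in e}i_k},\nonumber
\eqn
where $i_1i_2\ldots i_n$ is the string associated to $\mu$. Specialising to $\mu=u\cdot\gamma(u)$, since $e\subseteq\lrs{n-1}$ the final slot $\gamma(u)$ plays no role, and rewriting the sum using the ordered-$r$-partition $u=u_0\sep u_1\sep \ldots \sep u_{r-1}$ of $\lrs{n-1}$ (so that $i_k=j\iff k\in u_j$) gives
\beqn
\lrs{F_s}^{e}_u = \lrs{\widehat{K}^{(e)}_{\sigma^s}}^{u\cdot\gamma(u)}_{u\cdot\gamma(u)}=\omega^{s\sum_{j=0}^{r-1}j|e\cap u_j|}.\nonumber
\eqn

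Next I would compute the right-hand side. The ordered-$r$-partition $\nu_s:=e^c\sep\emptyset\sep\ldots\sep e\sep\ldots\sep\emptyset$ (with $e$ in slot $s$ and $\emptyset$ in every other nonzero slot) corresponds to the string $j_1j_2\ldots j_{n-1}$ with $j_k=s\cdot\mathbf{1}_{k\in e}$. Using $\lrs{f}^{j}_{i}=\omega^{ij}$ and the definition $f^{(n-1)}=f\otimes f\otimes\ldots\otimes f$,
\beqn
\lrs{f^{(n-1)}}^{\nu_s}_{u}=\omega^{\sum_{k=1}^{n-1}i_k j_k}=\omega^{s\sum_{k\in e}i_k}=\omega^{s\sum_{j=0}^{r-1}j|e\cap u_j|},\nonumber
\eqn
which is exactly the expression above. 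Taking $s=1,2,\ldots,r-1$ in turn recovers the three lines of the lemma statement.

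The main obstacle is purely notational: one must be confident with the correspondence between the ordered-$r$-partition $\nu_s$ and its string of indices, in particular checking that placing $e^c$ in slot $0$ makes no contribution to the exponent (since the factor is $\omega^{0}=1$) while placing $e$ in slot $s$ contributes exactly $\omega^{s\sum_{k\in e}i_k}$. Once this translation is in place, both sides collapse to the same monomial in $\omega$ and the identity is immediate; no appeal to orthogonality or the inverse character matrix is needed for the lemma itself (that enters only in the subsequent step that uses $F_s$ together with its companion $G_s$).
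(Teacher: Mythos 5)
Your proposal is correct and follows essentially the same route as the paper's own proof: both sides are evaluated explicitly as powers of $\omega$ using $\lrs{\widehat{K}_{\sigma^s}}_i^j=\omega^{is}\delta_{ij}$ and $\lrs{f}_i^j=\omega^{ij}$, both reduce to the exponent $s\left(0|u_0\cap e|+1|u_1\cap e|+\cdots+(r-1)|u_{r-1}\cap e|\right)$, and both invoke $e\subseteq\lrs{n-1}$ to discard the appended index $\gamma(u)$. Your remarks that the $e^c$ slot contributes only $\omega^0$ and that orthogonality is not needed here are consistent with, and slightly more explicit than, the paper's treatment.
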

\begin{proof}
We recall that $\lrs{\widehat{K}_{\sigma^s}}_i^j=\omega^{is}\delta_{ij}$, so, for $\mu=\mu_0\sep \mu_1\sep \mu_2\sep \ldots \sep \mu_{r-1}$ an ordered-$r$-parition of $\lrs{n}$, and $e$ a subset of $\lrs{n-1}$ we have
\beqn
\lrs{\widehat{K}^{(e)}_{\sigma^s}}^\mu_\mu=\omega^{s\left(0|\mu_0\cap e|+1|\mu_1\cap e|+\ldots +(r-1)|\mu_{r-1}\cap e|\right)},\nonumber
\eqn
so
\beqn
\lrs{\widehat{K}^{(e)}_{\sigma^s}}^{u\cdot \gamma(u)}_{u\cdot \gamma(u)}=\omega^{s\left(0|u_0\cap e|+1|u_1\cap e|+\ldots +(r-1)|u_{r-1}\cap e|\right)},\nonumber
\eqn
because $e\subseteq \lrs{n-1}$.
On the other hand $\lrs{f}_i^j=\omega^{ij}$, so
\beqn
\lrs{f^{(n-1)}}_{u}^{e^c\sep \emptyset\sep \ldots \sep \emptyset\sep e\sep \emptyset \sep \ldots \sep \emptyset}=\omega^{s\left(0|u_0\cap e|+1|u_1\cap e|+\ldots +(r-1)|u_{r-1}\cap e|\right)},\nonumber
\eqn
where $e$ appears in the $s^{th}$ position.
\end{proof}

Define, for $i\in \lrs{r-1}$, the rectangular $2^{n-1}\times r^{n-1}$ matrices
\beqn
\lrs{G_1}^u_e:&=\lrs{{f^{-1}}^{(n-1)}}^{e^c\cdot \gamma(u)\sep e\sep \emptyset\sep \emptyset\sep \ldots\sep \emptyset}_{ u\cdot \gamma(u)}\\
\lrs{G_2}^u_e:&=\lrs{{f^{-1}}^{(n-1)}}^{e^c\cdot \gamma(u)\sep \emptyset\sep e\sep \emptyset\sep \ldots\sep \emptyset}_{ u\cdot \gamma(u)}\\
&\vdots\\
\lrs{G_{r-1}}^{u}_e&=\lrs{{f^{-1}}^{(n-1)}}^{ e^c\cdot\gamma(u)\sep \emptyset\sep \emptyset\sep \emptyset\sep \ldots\sep e}_{u\cdot \gamma(u)}.\nonumber
\eqn
Of course $G_iF_j=\delta_{ij}\id$, so we now have the second part of the inversion:
\beqn
\vec{\alpha_i}=G_{i}\eta.\nonumber
\eqn

\section{Inversion of any abelian group-based model}\label{sec:anyabelian}

\begin{lem}\label{lem:abeliangroup}
Any (finitely generated) abelian group $G$ is isomorphic to a direct product of cyclic groups of prime-power order, ie. $G\cong \mathbb{Z}_{r_1}\times \mathbb{Z}_{r_2}\times \ldots \times \mathbb{Z}_{r_q}$ where each $r_i=p_i^{n_i}$ where $p_i$ is prime and $n_i$ is a positive integer.
\end{lem}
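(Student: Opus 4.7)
The plan is a two-stage classical argument: first show that any finite abelian group splits as the direct sum of its $p$-primary components, then show that each $p$-primary component itself splits as a sum of cyclic $p$-groups. Since the phylogenetic applications here only involve finite $G$, I focus on that case; a truly finitely-generated abelian group reduces to this after first splitting off a free part $\mathbb{Z}^{k}$ (which the lemma as written silently omits, since $\mathbb{Z}$ is not of prime-power order).

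Stage 1 (primary decomposition). Let $|G|=p_1^{a_1}\cdots p_q^{a_q}$ with the $p_i$ distinct primes, and set
\[ G_{p_i}:=\{g\in G : p_i^k g = 0 \text{ for some } k\geq 1\}. \]
Each $G_{p_i}$ is a subgroup because $G$ is abelian. Given $g\in G$ of order $m=p_1^{b_1}\cdots p_q^{b_q}$, the integers $m_i:=m/p_i^{b_i}$ are jointly coprime, so by Bezout there exist $c_i\in\mathbb{Z}$ with $\sum_i c_i m_i = 1$; hence $g = \sum_i (c_i m_i)g$ with each summand lying in the corresponding $G_{p_i}$. Uniqueness is automatic since an element annihilated by a power of $p_i$ and a coprime power of $p_j$ must be zero. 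This gives $G = G_{p_1} \oplus \ldots \oplus G_{p_q}$.

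Stage 2 (cyclic decomposition of a $p$-group). Let $H$ be a finite abelian $p$-group. By induction on $|H|$, I would show $H$ is a direct sum of cyclic $p$-groups. The inductive step rests on the claim that if $x\in H$ has maximal order, then $\langle x\rangle$ is a direct summand of $H$; granted this, induction applied to a complement of strictly smaller order finishes the step. This claim is the one substantive ingredient and is the \emph{main obstacle}. The standard argument chooses $K\leq H$ maximal subject to $K\cap \langle x\rangle = \{0\}$, then derives a contradiction from the supposition $\langle x\rangle + K \neq H$. Explicitly, for any $y\notin \langle x\rangle + K$ one picks the least $k\geq 1$ with $p^k y \in \langle x\rangle + K$, writes $p^k y = nx + z$ with $z\in K$, and uses maximality of the order of $x$ to force $p\mid n$; the element $y':= y - (n/p)p^{k-1}x$ then generates a subgroup strictly larger than $K$ whose intersection with $\langle x\rangle$ remains trivial, contradicting the choice of $K$.

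Stage 3 (combine). Applying Stage 2 to each $G_{p_i}$ produced by Stage 1 yields the decomposition $G\cong \mathbb{Z}_{r_1}\times \mathbb{Z}_{r_2}\times \ldots \times \mathbb{Z}_{r_q}$ with each $r_j$ a prime power, as claimed. The outer reduction and the Bezout step are routine; essentially all the work is concentrated in the maximality argument of Stage 2, where the delicate point is identifying the element $y'$ that strictly enlarges $K$ while preserving trivial intersection with $\langle x\rangle$.
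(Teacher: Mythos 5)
The paper does not actually prove this lemma: it is quoted here (and again, essentially verbatim, as a numbered Result in the Background section) as the classical structure theorem for finite(ly generated) abelian groups, with no argument supplied. So there is no paper proof to match, and your task was to supply the standard one. The outline you give --- primary decomposition via a Bezout identity among the $m_i=m/p_i^{b_i}$, followed by induction on a finite abelian $p$-group using the fact that a cyclic subgroup generated by an element of maximal order is a direct summand --- is exactly the textbook route, and Stages 1 and 3 are fine. Your observation that the statement silently drops the free part $\mathbb{Z}^k$ of a genuinely infinite finitely generated group is also correct and worth recording, since only finite $G$ arises in the models considered.

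One detail in Stage 2 does not work as written. If $k>1$ is the least exponent with $p^k y\in\langle x\rangle+K$ and $p^k y=nx+z$ with $z\in K$, the element $y'=y-(n/p)p^{k-1}x$ does not satisfy $p^{j}y'\in K$ for the relevant $j$: one computes $p^k y'=nx+z-(n/p)p^{2k-1}x$, and the $x$-terms cancel only when $k=1$. The standard repair is to first replace $y$ by $p^{k-1}y$, which still lies outside $\langle x\rangle+K$ but now satisfies $p\cdot(p^{k-1}y)\in\langle x\rangle+K$; writing $p(p^{k-1}y)=nx+z$, the divisibility $p\mid n$ follows from the maximality of the order of $x$ \emph{together with} $\langle x\rangle\cap K=0$ (you should cite both), and then $y'=p^{k-1}y-(n/p)x$ satisfies $py'\in K$, $y'\notin\langle x\rangle+K$, and $\bigl(K+\langle y'\rangle\bigr)\cap\langle x\rangle=0$, contradicting maximality of $K$. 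With that one-line adjustment your argument is complete and is the proof any standard reference would give for the result the paper takes for granted.
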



\begin{lem}
The group-based model arising from the $G$ is defined only up to group isomorphisms of $G$.
\end{lem}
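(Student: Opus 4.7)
The plan is to show that if $\phi\colon G\to G'$ is an isomorphism of abelian groups, then the set of rate matrices -- and hence the entire family of phylogenetic tensors -- produced by the construction for $G$ coincides with that for $G'$ up to a permutation of the state labels, which is no more than a relabeling of the alphabet.

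First I would fix orderings $G=\{\sigma_1,\ldots,\sigma_d\}$ and $G'=\{\sigma_1',\ldots,\sigma_d'\}$ with $\sigma_i':=\phi(\sigma_i)$; these induce identifications $\langle G\rangle_\mathbb{C}\cong \mathbb{C}^d\cong \langle G'\rangle_\mathbb{C}$ under which the isomorphism $\phi$ corresponds to a permutation matrix $P_\phi\in GL(d,\mathbb{C})$ acting on the standard basis. The key intertwining relation to verify is
\beqn
P_\phi\, K_\sigma \;=\; K'_{\phi(\sigma)}\, P_\phi \qquad \text{for all } \sigma\in G,\nonumber
\eqn
which follows immediately from (\ref{eq:Ksigdef}) together with $\phi(\sigma\sigma_i)=\phi(\sigma)\phi(\sigma_i)$, i.e.\ from the homomorphism property of $\phi$.

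Given this intertwiner, the generic rate matrix $Q=-\lambda\id+\sum_{\epsilon\neq\sigma\in G}\alpha^\sigma K_\sigma$ built from $G$ satisfies
\beqn
P_\phi\, Q\, P_\phi^{-1} \;=\; -\lambda\id+\sum_{\epsilon'\neq\sigma'\in G'}\alpha'^{\sigma'} K'_{\sigma'},\nonumber
\eqn
with the bijective relabeling of rate parameters $\alpha'^{\phi(\sigma)}:=\alpha^\sigma$. Thus every rate matrix of the $G$-model is conjugate via the fixed permutation $P_\phi$ to a rate matrix of the $G'$-model, with a one-to-one correspondence of edge parameters. Because $P_\phi$ is a permutation, it also satisfies $\delta\circ P_\phi=(P_\phi\otimes P_\phi)\circ \delta$ (both sides send a basis vector $\xi_i$ to $\xi_{\phi(i)}\otimes \xi_{\phi(i)}$), so the intertwining lifts to the tensor formulation (\ref{eq:intertwine}) and, by iteration, to the full phylogenetic tensor (\ref{eq:networkrep}): the tensor $P$ built from $G$ equals $(P_\phi^{-1})^{\otimes n}$ applied to the tensor $P'$ built from $G'$ with the corresponding parameters.

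The conclusion is that the two models differ only by a relabeling of the $d$ states on each taxon, and so coincide as Markov models on $[d]$. There is no hard step: the single modest verification is the intertwining $P_\phi K_\sigma=K'_{\phi(\sigma)}P_\phi$, and the rest of the argument is formal. Since this is precisely the statement that the construction depends only on the abstract isomorphism type of $G$, the lemma follows.
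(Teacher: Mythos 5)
Your proposal is correct and takes essentially the same route as the paper: the entire content is the homomorphism identity $\phi(\sigma\sigma_i)=\phi(\sigma)\phi(\sigma_i)$ forcing the regular-representation permutation matrices of $G$ and $G'$ to coincide under the induced relabeling, which is exactly the paper's computation (note that with your own convention $\sigma_i':=\phi(\sigma_i)$ your intertwiner $P_\phi$ is just the identity, so $P_\phi K_\sigma=K'_{\phi(\sigma)}P_\phi$ collapses to the paper's statement $\lrs{K'_{\phi(\sigma)}}{}_i^j=\lrs{K_\sigma}_i^j$). Your additional check that the relabeling commutes with $\delta$ and hence lifts to the full phylogenetic tensor is a harmless strengthening that the paper leaves implicit.
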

\begin{proof}
A generic rate matrix for the group-based model arsing from $G$ is given by
\beqn
Q=-\lambda\id+\sum_{e\neq \sigma\in G}\alpha^{\sigma}K_{\sigma}.\nonumber
\eqn
Under a group isomorphism $\phi:G\rightarrow G'$, we have $\phi(\sigma_i\sigma_j)=\phi(\sigma_i)\phi(\sigma_j)$.

Recall (\ref{eq:Ksigdef}), so that the matrix elements $\lrs{K_{\sigma}}_i^j$ is set via the action $\sigma_i\mapsto \sigma\sigma_i=\sigma_j$.
If we consider the regular representation of $G'$ we then have $\lrs{K_{\phi(\sigma})}_i^j$ defined by $\phi(\sigma_i)\mapsto \phi(\sigma)\phi(\sigma_i)$.
Now $\phi(\sigma)\phi(\sigma_i)=\phi(\sigma\sigma_i)=\phi(\sigma_j)$ and, because $\phi$ is a group isomorphism, this occurs if and only if $\sigma\sigma_i=\sigma_j$.
Thus $\lrs{K_{\phi(\sigma})}_i^j=\lrs{K_{\sigma}}_i^j$ for all $i$ and $j$.
\end{proof}

This means that we can restrict attention to a single representitive in the isomorphism class of $G$.
Of course, for this purpose we choose the representative guaranteed by Lemma~\ref{lem:abeliangroup}.

Thus, for any abelian group $G$, with generators $\sigma_1,\sigma_2,\ldots ,\sigma_q$, as per Lemma~\ref{lem:abeliangroup}, the corresponding group-based model has rate generators given by
\beqn
L_{\sigma}=-\id+{K}_{\sigma_1^{m_1}}\otimes {K}_{\sigma_2^{m_2}}\otimes \ldots \otimes {K}_{\sigma_q^{m_q}},\nonumber
\eqn
for all $e\neq \sigma=(\sigma_1^{m_1},\sigma_2^{m_2},\ldots,\sigma_{q}^{m_q})\in G$, where $K_{\sigma_i}$ is the permutation matrix representing the generator $\sigma_i\in \mathbb{Z}_{r_i}$.
The character table $f$ of $G$ is simply the tensor product of the individual character tables of the $\mathbb{Z}_{r_i}$:
\beqn
f=f_1\otimes f_2\otimes \ldots \otimes f_q.\nonumber
\eqn
In the diagonal basis we have matrix elements
\beqn
\lrs{f_kK_{\sigma_k^{s}}f_k^{-1}}_i^j=\lrs{\hat{K}_{\sigma_k^{s}}}_i^j=\left(\omega_{k}\right)^{is}\delta_{ij},\nonumber
\eqn
where $\omega_k$ is a $k^{th}$ root of unity.
Thus
\beqn
\lrs{\hat{K}_{\sigma_1^{m_1}}\otimes \hat{K}_{\sigma_2^{m_2}} \otimes \ldots \hat{K}_{\sigma_q^{m_q}}}_{i_1i_2\ldots i_q}^{j_1j_2\ldots j_q}=\delta_{i_1j_1}\delta_{i_2j_2}\ldots \delta_{i_qj_q}\left(\omega_{1}\right)^{i_1m_1}\left(\omega_{2}\right)^{i_2m_2}\ldots \left(\omega_{q}\right)^{i_qm_q}.\nonumber
\eqn

We write phylogenetic tensors for this model in the form $P_{i_{11}i_{12}\ldots i_{1n},i_{21}i_{22}\ldots i_{2n}\ldots \ldots i_{q1}i_{q2}\ldots i_{qn}}$, where $0\leq i_{sj}\leq r_s$ for all $0\leq s\leq q$.
We simplify notation by writing each group of indices as $\mu^{(s)}:=i_{s1}i_{s2}\ldots i_{sn}$ where $\mu^{(s)}$ is an ordered-$r_s$-partition of $\lrs{n}$.

\begin{lem}
In the diagonal basis, the uniform initial distribution on the star tree has components
\beqn
\lrs{\widehat{\delta^{n-1}\pi}}_{\mu^{(1)}\mu^{(2)}\ldots \mu^{(q)}}=\left\{\begin{array}{ll}1,\text{ if, }0|\mu_0^{(i)}|+1|\mu_{1}^{(i)}|+\ldots +(r_i-1)|\mu_{r-1}^{(i)}|=0,\forall i;\\
0,\text{ otherwise.}\end{array}\right.\nonumber
\eqn
\end{lem}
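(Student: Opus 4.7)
The plan is to reduce to the cyclic case already handled in Lemma~\ref{lem:Z4diagstar} by exploiting the product structure of $G$. Since $G \cong \mathbb{Z}_{r_1}\times\mathbb{Z}_{r_2}\times\ldots\times\mathbb{Z}_{r_q}$, the underlying vector space decomposes as $\mathbb{C}^d \cong \mathbb{C}^{r_1}\otimes\mathbb{C}^{r_2}\otimes\ldots\otimes\mathbb{C}^{r_q}$ with basis vectors identified via $\xi_{(i_1,\ldots,i_q)} \equiv \xi^{(1)}_{i_1}\otimes\ldots\otimes\xi^{(q)}_{i_q}$, and the stationary distribution splits accordingly as $\pi = \pi_1\otimes\pi_2\otimes\ldots\otimes\pi_q$ with $\pi_k = \fra{1}{r_k}(1,1,\ldots,1)^T$.

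First I would verify that the star-tree tensor itself factors across the cyclic components. Because $\delta(\xi_{(i_1,\ldots,i_q)}) = \xi_{(i_1,\ldots,i_q)}\otimes\xi_{(i_1,\ldots,i_q)}$ splits coordinate-wise, applying the natural shuffle isomorphism that regroups the $n$ copies of $\mathbb{C}^d = \bigotimes_k \mathbb{C}^{r_k}$ into $\bigotimes_k (\otimes^n \mathbb{C}^{r_k})$ yields $\delta^{n-1}\pi = (\delta^{n-1}\pi_1)\otimes(\delta^{n-1}\pi_2)\otimes\ldots\otimes(\delta^{n-1}\pi_q)$, where each $\delta^{n-1}\pi_k \in \otimes^n \mathbb{C}^{r_k}$ is the star-tree tensor for the $\mathbb{Z}_{r_k}$ model.

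Next, the stated factorization $f = f_1\otimes f_2\otimes\ldots\otimes f_q$ of the character table implies, under the same shuffle, that $f^{(n)}$ acts as the tensor product of the individual $f_k^{(n)}$ on each cyclic slot. Therefore the transformed star-tree tensor factorizes: $\widehat{\delta^{n-1}\pi} = (\widehat{\delta^{n-1}\pi_1})\otimes\ldots\otimes(\widehat{\delta^{n-1}\pi_q})$. Applying Lemma~\ref{lem:Z4diagstar} to each factor gives that the $\mu^{(k)}$-component of $\widehat{\delta^{n-1}\pi_k}$ equals $1$ precisely when $0|\mu_0^{(k)}| + 1|\mu_1^{(k)}| + \ldots + (r_k-1)|\mu_{r_k-1}^{(k)}| \equiv 0 \pmod{r_k}$, and $0$ otherwise. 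Taking the product of these $q$ indicators yields the claimed expression.

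The main obstacle is purely bookkeeping: making the shuffle isomorphism that reorders the $n\times q$ tensor slots (from \emph{grouped by taxon} to \emph{grouped by cyclic factor}) precise, and checking that it simultaneously intertwines $\delta$, the stationary distribution $\pi$, and the character transform $f^{(n)}$. Once this identification is in place, the computation reduces to $q$ independent copies of the $\mathbb{Z}_r$ case already established, and no further analysis is required.
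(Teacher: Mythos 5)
Your proposal is correct, and it follows exactly the route the paper intends: the paper states this lemma without proof, treating it as immediate from the factorization $f=f_1\otimes f_2\otimes\ldots\otimes f_q$ and the already-established $\mathbb{Z}_r$ case (Lemma~\ref{lem:Z4diagstar}), which is precisely the reduction you carry out via the shuffle isomorphism. The only caveat is notational: the congruences in the statement should be read modulo $r_i$ (and the paper's $\pi=(\sum_i r_i)^{-1}(1,\ldots,1)^T$ should be $(\prod_i r_i)^{-1}(1,\ldots,1)^T$, as your factorization $\pi=\pi_1\otimes\ldots\otimes\pi_q$ implicitly and correctly assumes).
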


A generic phylogenetic tensor for this model can be expressed as
\beqn
P=e^{-\lambda}\exp\left(\sum_{\emptyset \neq e\subseteq \lrs{n-1},s_i\in\lrs{r_i-1}}\alpha^{s_1s_2\ldots s_q}_{e}K^{(e)}_{\sigma^{s_1}_1}\otimes K^{(e)}_{\sigma^{s_2}_2}\otimes \ldots \otimes K^{(e)}_{\sigma^{s_q}_q}\right)\cdot \delta^{n-1}\pi,
\nonumber
\eqn
where $\pi$ is the unifrom distribution on ${\sum_{i=1}^q r_i}$ states, i.e.  $\pi=({\sum_{i=1}^q r_i})^{-1}(1,1,\ldots ,1)^T$.

In the diagonal basis $\widehat{P}=(f_1\otimes f_2\otimes\ldots \otimes f_q)^{(n)}\cdot P$, and, as a consequence of the previous lemma, $P$ has many vanishing components. 
To avoid these, for each $i\in \lrs{q}$ we take $u^{(i)}=u^{(i)}_0\sep u^{(i)}_1\sep u^{(i)}_2\sep \ldots \sep u^{(i)}_{r_i-1} $ as an \emph{ordered-$r_i$-partition} of $\lrs{n-1}$ and set
\beqn
\gamma_{i}(u^{(i)})=r_i-(0|u^{(i)}_0|+1|u^{(i)}_1|+2|u^{(i)}_2|+\ldots +(r_i-1)|u^{(i)}_{r-1}|) \text{ (mod $r$)}.\nonumber
\eqn

We then define
\beqn
\mathcal{P}_{u^{(1)}u^{(2)}\ldots u^{(q)}}:=\lrs{\widehat{P}}_{u^{(1)}\cdot \gamma_1(u^{(1)})u^{(2)}\cdot \gamma_2(u^{(2)})\ldots u^{(q)}\cdot \gamma_1(u^{(q)})},\nonumber
\eqn
and
\beqn
\eta_{u^{(1)}u^{(2)}\ldots u^{(q)}}:=\lrs{\sum_{\emptyset \neq e\subseteq \lrs{n-1},s_i\in\lrs{r_i-1}}\alpha^{s_1s_2\ldots s_q}_{e}\widehat{K}^{(e)}_{\sigma^{s_1}_1}\otimes \widehat{K}^{(e)}_{\sigma^{s_2}_2}\otimes \ldots \otimes \widehat{K}^{(e)}_{\sigma^{s_q}_q}}^{u^{(1)}\cdot \gamma_1(u^{(1)})u^{(2)}\cdot \gamma_2(u^{(2)})\ldots u^{(q)}\cdot \gamma_1(u^{(q)})}_{u^{(1)}\cdot \gamma_1(u^{(1)})u^{(2)}\cdot \gamma_2(u^{(2)})\ldots u^{(q)}\cdot \gamma_1(u^{(q)})},\nonumber
\eqn
so that we have the first part of the inversion
\beqn
\mathcal{P}_{u^{(1)}u^{(2)}\ldots u^{(q)}}&=e^{-\lambda}\exp\left(\eta_{u^{(1)}u^{(2)}\ldots u^{(q)}}\right),\\
\eta_{u^{(1)}u^{(2)}\ldots u^{(q)}}&=\lambda+\ln\left(\mathcal{P}_{u^{(1)}u^{(2)}\ldots u^{(q)}}\right).
\nonumber
\eqn

We define the column vectors $\vec{\alpha}^{s_1s_2\ldots s_q}:=\{\alpha^{s_1s_2\ldots s_q}_e\}_{\emptyset\neq e\subseteq \lrs{n-1}}$ and $\vec{\eta}:=\{\eta_{u^{(1)}u^{(2)}\ldots u^{(q)}}\}$ where $u_i$ is an ordered-$r_i$-partition of $\lrs{n-1}$, and the $ (r_1r_2\ldots r_q)^{n-1}\times 2^{n-1}$ matrices
\beqn
\lrs{F_{s_1s_2\ldots s_q}}^e_{u_1u_2\ldots u_q}:&=\lrs{K_{\sigma^{s_1}_1}^{(e)}}^{u_1\cdot \gamma(u_1)}_{u_1\cdot \gamma(u_1)}\lrs{K_{\sigma^{s_2}_2}^{(e)}}^{u_2\cdot \gamma(u_2)}_{u_2\cdot \gamma(u_2)}\ldots\lrs{K_{\sigma^{s_q}_q}^{(e)}}^{u_q\cdot \gamma(u_q)}_{u_q\cdot \gamma(u_q)}\\
&=\lrs{f_1^{(n-1)}}_{u_1}^{e^c\sep \emptyset \sep \ldots \sep \emptyset \sep e\sep \emptyset \sep \ldots \sep\emptyset}
\lrs{f_2^{(n-1)}}_{u_2}^{e^c\sep \emptyset \sep \ldots \sep \emptyset \sep e\sep \emptyset \sep \ldots \sep\emptyset}
\ldots
\lrs{f_q^{(n-1)}}_{u_q}^{e^c\sep \emptyset \sep \ldots \sep \emptyset \sep e\sep \emptyset \sep \ldots \sep\emptyset},
\nonumber
\eqn
where in each term $e$ appears in the $s_i^{th}$ position and the equality follows from Lemma~\ref{lem:blah}.

We can then write the vector equation
\beqn
\vec{\eta}=\sum_{s_1s_2\ldots s_q:1\leq s_i\leq r_i-1}F_{s_1s_2\ldots s_q}\vec{\alpha}^{s_1s_2\ldots s_q}.\nonumber
\eqn
If we define the $ 2^{n-1} \times(r_1r_2\ldots r_q)^{n-1}$ matrices
\beqn
\lrs{G_{s_1s_2\ldots s_q}}_e^{u_1u_2\ldots u_q}
=\lrs{{f_1^{-1}}^{(n-1)}}_{u_1}^{e^c\sep \emptyset \sep \ldots \sep \emptyset \sep e\sep \emptyset \sep \ldots \sep\emptyset}
\lrs{{f_2^{-1}}^{(n-1)}}_{u_2}^{e^c\sep \emptyset \sep \ldots \sep \emptyset \sep e\sep \emptyset \sep \ldots \sep\emptyset}
\ldots
\lrs{{f_q^{-1}}^{(n-1)}}_{u_q}^{e^c\sep \emptyset \sep \ldots \sep \emptyset \sep e\sep \emptyset \sep \ldots \sep\emptyset},
\nonumber
\eqn
where in each term $e$ appears in the $s_i^{th}$ position, we have the orthogonality relations
\beqn
G_{s_1s_2\ldots s_q}F_{s_1's_2'\ldots s_q'}=\delta_{s_1s_1'}\delta_{s_2s_2'}\ldots \delta_{s_qs_q'}\id.\nonumber
\eqn

This gives us the second part of the inversion of any group-based model:
\beqn
\vec{\alpha}^{s_1s_2\ldots s_q}=G_{s_1s_2\ldots s_q}\vec{\eta}.\nonumber
\eqn

\section{Conclusion}

In this article we have given an alternative derivation of the inversion of group-based phylogenetic models.
Primarily our method relies on the remarkable intertwining relation between branching events and Markov evolution (\ref{eq:intertwine}), and the resulting simplified expression of phylogenetic tensors given in (\ref{eq:networkrep}).
From there we took a representation theoretic approach concentrating on the structure of tensor indices.

\section*{Funding}

This research was conducted with support from Australian Research Council (ARC) Discovery Grant DP0770991 and Future Fellowship FT100100031.

\bibliographystyle{jtbnew}
\bibliography{masterC}

\begin{thebibliography}{38}
\expandafter\ifx\csname natexlab\endcsname\relax\def\natexlab#1{#1}\fi
\expandafter\ifx\csname url\endcsname\relax
  \def\url#1{\texttt{#1}}\fi
\expandafter\ifx\csname urlprefix\endcsname\relax\def\urlprefix{URL }\fi
\providecommand{\selectlanguage}[1]{\relax}

\bibitem[{Bashford \emph{et~al.}(2004)Bashford, Jarvis, Sumner \&
  Steel}]{bashford2004}
\textsc{Bashford, J.~D., Jarvis, P.~D., Sumner, J.~G. \& Steel, M.~A.} (2004).
\newblock {$U(1)\times U(1)\times U(1)$} symmetry of the {K}imura 3{S}{T} model
  and phylogenetic branching processes.
\newblock \emph{J. Phys. A Math. Gen.} \textbf{37}, L1--L9.

\bibitem[{Bryant(2005)}]{bryant2005}
\textsc{Bryant, D.} (2005).
\newblock On the uniqueness of the selection criterion in {N}eighbor-{J}oining.
\newblock \emph{J. Class.} \textbf{22}, 3--15.

\bibitem[{Bryant(2009)}]{bryant2009}
\textsc{Bryant, D.} (2009).
\newblock Hadamard phylogenetic methods and the {$n$}-taxon process.
\newblock \emph{Bull. Math. Biol.} \textbf{71}, 297--309.

\bibitem[{Chor \emph{et~al.}(2000)Chor, Hendy, Holland \& Penny}]{chor2000}
\textsc{Chor, B., Hendy, M.~D., Holland, B.~R. \& Penny, D.} (2000).
\newblock Multiple maxima of likelihood in phylogenetic trees: An analytic
  approach.
\newblock \emph{Mol. Biol. Evol.} \textbf{17}, 1529--1541.

\bibitem[{Evans \& Speed(1993)}]{evans1993}
\textsc{Evans, S.~N. \& Speed, T.~P.} (1993).
\newblock Invariants of some probability models used in phylogenetic inference.
\newblock \emph{Ann. Stat.} \textbf{21(1)}, 355--377.

\bibitem[{Felsenstein(2004)}]{felsenstein2004}
\textsc{Felsenstein, J.} (2004).
\newblock \emph{Inferring Phylogenies}.
\newblock Sinauer Associates, Sunderland.

\bibitem[{Griffiths \& Majoram(1996)}]{griffiths1996}
\textsc{Griffiths, R.~C. \& Majoram, P.} (1996).
\newblock {Ancestral inference from samples of DNA sequences with
  recombination}.
\newblock \emph{J. Comput. Biol.} \textbf{3}, 479--502.

\bibitem[{Griffiths \& Marjoram(1997)}]{griffiths1997}
\textsc{Griffiths, R.~C. \& Marjoram, P.} (1997).
\newblock \emph{Progress in population genetics and human evolution}, vol.~87
  of \emph{IMA volumes in mathematics and its applications}, chap. An ancestral
  recombination graph.
\newblock Springer Verlag, Berlin, pp. 257--270.

\bibitem[{Hendy(1989)}]{hendy1989b}
\textsc{Hendy, M.~D.} (1989).
\newblock The relationship between simple evolutionary tree models and
  observable sequence data.
\newblock \emph{Syst. Zool.} \textbf{38}, 310--321.

\bibitem[{Hendy(1991)}]{hendy1991}
\textsc{Hendy, M.~D.} (1991).
\newblock {A Combinatorial Description of the Closest Tree Algorithm for
  Finding Evolutionary Trees}.
\newblock \emph{Discrete Math.} \textbf{96}, 51--58.

\bibitem[{Hendy \& Charleston(1993)}]{hendy1993c}
\textsc{Hendy, M.~D. \& Charleston, M.~A.} (1993).
\newblock {Hadamard conjugation: A versatile tool for modelling nucleotide
  sequence evolution}.
\newblock \emph{New Zeal. J. Bot.} \textbf{31}, 231--237.

\bibitem[{Hendy \& Penny(1989)}]{hendy1989}
\textsc{Hendy, M.~D. \& Penny, D.} (1989).
\newblock A framework for the quantitative study of evolutionary trees.
\newblock \emph{Syst. Zool.} \textbf{38}, 297--309.

\bibitem[{Hendy \& Penny(1993)}]{hendy1993}
\textsc{Hendy, M.~D. \& Penny, D.} (1993).
\newblock Spectral analysis of phylogenetic data.
\newblock \emph{J. Class.} \textbf{10}, 1--20.

\bibitem[{Hendy \emph{et~al.}(1994)Hendy, Penny \& Steel}]{hendy1994}
\textsc{Hendy, M.~D., Penny, D. \& Steel, M.} (1994).
\newblock A discrete {F}ourier analysis for evolutionary trees.
\newblock \emph{Proc. Natl. Acad. Sci.} \textbf{91}, 3339--3343.

\bibitem[{Hendy \& Snir(2008)}]{hendy2008}
\textsc{Hendy, M.~D. \& Snir, S.} (2008).
\newblock Hadamard conjugation for the {Kimura} 3st model: combinatorial proof
  using path sets.
\newblock \emph{{IEEE ACM Trans Comput Biol Bioinformatics}} \textbf{5},
  461--470.

\bibitem[{Holland \emph{et~al.}(2012)Holland, Jarvis \& Sumner}]{holland2012}
\textsc{Holland, B.~R., Jarvis, P.~D. \& Sumner, J.~G.} (2012).
\newblock {A powerful low-parameter method for inferring quartets under the
  General Markov Model}.
\newblock \emph{In submission, 17 pp.} .

\bibitem[{Holland \emph{et~al.}(2003)Holland, Penny \& Hendy}]{holland2003}
\textsc{Holland, B.~R., Penny, D. \& Hendy, M.~D.} (2003).
\newblock {Outgroup Misplacement and Phylogenetic Inaccuracy Under a Molecular
  Clock -- A Simulation Study}.
\newblock \emph{Syst. Biol.} \textbf{52}, 229--238.

\bibitem[{Huber \emph{et~al.}(2002)Huber, K.~T., Penny, Moulton \&
  Hendy}]{huber2002}
\textsc{Huber, K.~T., K.~T., M.~L., Penny, D., Moulton, V. \& Hendy, M.}
  (2002).
\newblock Spectronet: a package for computing spectra and median networks.
\newblock \emph{Appl. Bioinform.} \textbf{1}, 2041--2059.

\bibitem[{Huber \emph{et~al.}(2001)Huber, Watson \& D.Hendy}]{huber2001}
\textsc{Huber, K.~T., Watson, E.~E. \& D.Hendy, M.} (2001).
\newblock An algorithm for constructing local regions in a phylogenetic
  network.
\newblock \emph{Mol. Phylogenet. Evol.} \textbf{19}, 1--8.

\bibitem[{Jin \emph{et~al.}(2006)Jin, Nakhleh, Snir \& Tuller}]{jin2006}
\textsc{Jin, G., Nakhleh, L., Snir, S. \& Tuller, T.} (2006).
\newblock Maximum likelihood of phylogenetic networks.
\newblock \emph{Bioinformatics} \textbf{21}, 26042611.

\bibitem[{Kimura(1981)}]{kimura1981}
\textsc{Kimura, M.} (1981).
\newblock Estimation of evolutionary distances between homologous nucleotide
  sequences.
\newblock \emph{Proc. Natl. Acad. Sci.} \textbf{78}, 1454--1458.

\bibitem[{Lento \emph{et~al.}(1995)Lento, Hickson, Chambers \&
  Penny}]{lento1995}
\textsc{Lento, G.~M., Hickson, R.~E., Chambers, G.~K. \& Penny, D.} (1995).
\newblock Use of spectral analysis to test hypotheses on the origin of
  pinninpeds.
\newblock \emph{Mol. Biol. Evol.} \textbf{12}, 28--52.

\bibitem[{Matsen \emph{et~al.}(2008)Matsen, Mossel \& Steel}]{matsen2008}
\textsc{Matsen, F.~A., Mossel, E. \& Steel, M.} (2008).
\newblock Mixed-up trees: the structure of phylogenetic mixtures.
\newblock \emph{Bull. Math. Biol.} \textbf{70}, 1115--1139.

\bibitem[{Matsen \& Steel(2007)}]{matsen2007}
\textsc{Matsen, F.~A. \& Steel, M.} (2007).
\newblock Phylogenetic mixtures on a single tree can mimic a tree of another
  topology.
\newblock \emph{Syst. Biol.} \textbf{56}, 767--775.

\bibitem[{Sagan(2001)}]{sagan2001}
\textsc{Sagan, B.~E.} (2001).
\newblock \emph{The Symmetric Group: Representations, Combinatorial Algorithms,
  and Symmetric Functions. Second Edition.}
\newblock Graduate Texts in Mathematics. Springer, New York.

\bibitem[{Schliep(2009)}]{schliep2009}
\textsc{Schliep, K.~P.} (2009).
\newblock \emph{Some applications of statistical phylogenetics}.
\newblock Ph.D. thesis, Massey University.

\bibitem[{Semple \& Steel(2003)}]{semple2003}
\textsc{Semple, C. \& Steel, M.} (2003).
\newblock \emph{Phylogenetics}.
\newblock Oxford Press.

\bibitem[{Strimmer \& Moulton(2000)}]{strimmer2000}
\textsc{Strimmer, K. \& Moulton, V.} (2000).
\newblock Likelihood analysis of phylogenetic networks using directed graphical
  models.
\newblock \emph{Mol. Biol. Evol.} \textbf{17}, 875881.

\bibitem[{Strimmer \emph{et~al.}(2001)Strimmer, Wiuf \& Moulton}]{strimmer2001}
\textsc{Strimmer, K., Wiuf, C. \& Moulton, V.} (2001).
\newblock {Recombination Analysis Using Directed Graphical Models}.
\newblock \emph{Mol. Biol. Evol.} \textbf{18}, 97--99.

\bibitem[{Sturmfels \& Sullivant(2005)}]{sturmfels2004}
\textsc{Sturmfels, B. \& Sullivant, S.} (2005).
\newblock Toric ideals of phylogenetic invariants.
\newblock \emph{J. Comput. Biol.} \textbf{12}, 204--228.

\bibitem[{Sumner \emph{et~al.}(2012{\natexlab{a}})Sumner, Fern\'andez-S\'anchez
  \& Jarvis}]{sumner2011}
\textsc{Sumner, J., Fern\'andez-S\'anchez, J. \& Jarvis, P.}
  (2012{\natexlab{a}}).
\newblock Lie {M}arkov models.
\newblock \emph{J. Theor. Biol.} \textbf{298}, 16--31.

\bibitem[{Sumner \emph{et~al.}(2008)Sumner, Charleston, Jermiin \&
  Jarvis}]{sumner2008}
\textsc{Sumner, J.~G., Charleston, M.~A., Jermiin, L.~S. \& Jarvis, P.~D.}
  (2008).
\newblock Markov invariants, plethysms, and phylogenetics.
\newblock \emph{J. Theor. Biol.} \textbf{253}, 601--615.

\bibitem[{Sumner \emph{et~al.}(2012{\natexlab{b}})Sumner, Holland \&
  Jarvis}]{sumner2010}
\textsc{Sumner, J.~G., Holland, B.~R. \& Jarvis, P.~D.} (2012{\natexlab{b}}).
\newblock {The algebra of the general Markov model on trees and networks}.
\newblock \emph{Bull. Math. Biol.} \textbf{74}(4), 858--880.

\bibitem[{Sumner \& Jarvis(2005)}]{sumner2005}
\textsc{Sumner, J.~G. \& Jarvis, P.~D.} (2005).
\newblock Entanglement invariants and phylogenetic branching.
\newblock \emph{J. Math. Biol.} \textbf{51}, 18--36.

\bibitem[{Sumner \& Jarvis(2009)}]{sumner2009}
\textsc{Sumner, J.~G. \& Jarvis, P.~D.} (2009).
\newblock Markov invariants and the isotropy subgroup of a quartet tree.
\newblock \emph{J. Theor. Biol.} \textbf{258}, 302--310.

\bibitem[{{Szek\'ely} \emph{et~al.}(1993{\natexlab{a}}){Szek\'ely}, Erdös,
  Steel \& Penny}]{szekely1993b}
\textsc{{Szek\'ely}, L., Erdös, P., Steel, M. \& Penny, D.}
  (1993{\natexlab{a}}).
\newblock A fourier inversion formula for evolutionary trees.
\newblock \emph{Appl. Math. Lett.} \textbf{6}, 13--17.

\bibitem[{{Szek\'ely} \emph{et~al.}(1993{\natexlab{b}}){Szek\'ely}, Steel \&
  {Erd\"os}}]{szekely1993}
\textsc{{Szek\'ely}, L.~A., Steel, M.~A. \& {Erd\"os}, P.~L.}
  (1993{\natexlab{b}}).
\newblock Fourier calculus on evolutionary trees.
\newblock \emph{Adv. Appl. Math.} \textbf{14}, 200--216.

\bibitem[{{von Haeseler} \& Churchill(1993)}]{vonhaeseler1993}
\textsc{{von Haeseler}, A. \& Churchill, G.~A.} (1993).
\newblock {Network Models for Sequence Evolution}.
\newblock \emph{J. Mol. Evol.} \textbf{37}, 77--85.

\end{thebibliography}

\end{document}